\newcommand{\vx}{\mathbf{x}}
\newcommand{\eps}{\varepsilon}
\newcommand{\ft}{1}
\newcommand{\bx}{X}
\newcommand{\by}{Y}
\newcommand{\bz}{Z}
\newcommand{\df}{q}
\newcommand{\dr}{p}
\newcommand{\ty}{\tilde{\by}}
\newcommand{\tz}{\tilde{\bz}}
\newcommand{\tf}{\tilde{\df}}
\newcommand{\tr}{\tilde{\dr}}
\newcommand{\ts}{\tilde{s}}
\newcommand{\tI}{\tilde{I}}
\def\gL{{\mathcal{L}}}
\newcommand{\betaF}{\beta}
\newcommand{\betaFtx}[2]{\betaF(#1, #2)}
\DeclarePairedDelimiter\abs{\lvert}{\rvert}
\DeclarePairedDelimiter\norm{\lVert}{\rVert}
\let\oldabs\abs
\def\abs{\@ifstar{\oldabs}{\oldabs*}}
\let\oldnorm\norm
\def\norm{\@ifstar{\oldnorm}{\oldnorm*}}
\newcommand{\expec}[2]{\mathbb{E}_{#1}\left[#2\right]}
\newtheorem{theorem}{Theorem}
\newtheorem{lemma}[]{Lemma}
\newtheorem{proposition}[]{Proposition}
\title{
    Single Exposure Quantitative Phase Imaging with a Conventional Microscope using Diffusion Models
}
\author{
    Gabriel della Maggiora
    \textsuperscript{\rm \equalcontrib,1,2,6},
    Luis Alberto Croquevielle
    \textsuperscript{\rm \equalcontrib,3},
    Harry Horsley\textsuperscript{\rm 4}, \\
    Thomas Heinis\textsuperscript{\rm 3},
    Artur Yakimovich\textsuperscript{\rm 1,2,5 \thanks{Corresponding author.}}
}
\begin{document}

\maketitle

\begin{abstract}
    Phase imaging is gaining importance due to its applications in fields like biomedical imaging and material characterization. In biomedical applications, it can provide quantitative information missing in label-free microscopy modalities. One of the most prominent methods in phase quantification is the Transport-of-Intensity Equation (TIE). TIE often requires multiple acquisitions at different defocus distances, which is not always feasible in a clinical setting, due to hardware constraints. To address this issue, we propose the use of chromatic aberrations to induce the required through-focus images with a single exposure, effectively generating a through-focus stack. Since the defocus distance induced by the aberrations is small, conventional TIE solvers are insufficient to address the resulting artifacts. We propose Zero-Mean Diffusion, a modified version of diffusion models designed for quantitative image prediction, and train it with synthetic data to ensure robust phase retrieval. Our contributions offer an alternative TIE approach that leverages chromatic aberrations, achieving accurate single-exposure phase measurement with white light and thus improving the efficiency of phase imaging. Additionally, we present a new class of diffusion models that are well-suited for quantitative data and have a sound theoretical basis. To validate our approach, we employ a widespread brightfield microscope equipped with a commercially available color camera. We apply our model to clinical microscopy of patients' urine, obtaining accurate phase measurements.
\end{abstract}

\section{Introduction}

Phase imaging has emerged as a crucial technique in many applications, including biomedical imaging \cite{park_quantitative_2018}, label-free imaging characterization \cite{rivenson_phasestain_2019}, and modality imaging conversion \cite{wang_use_2024, zuo_transport_2020}. It enhances contrast for objects with little or no absorption, such as biological structures, and provides quantitative information about the morphology of microscopic objects like cells, which is typically missing in label-free microscopy \cite{zangle2014live}. For example, this information can be used to detect morphological changes in cells \cite{yakimovich2018label}.

Several techniques have been developed for phase quantification. Among these, the Transport-of-Intensity Equation (TIE) \cite{Teague:83} stands out due to its straightforward approach. TIE does not require phase unwrapping and is easy to integrate into hardware \cite{GUPTA2020126347}. It enables phase retrieval under partially coherent illumination \cite{PhysRevLett.80.2586}, which is common in clinical microscopes. However, a primary challenge in TIE is the presence of low-frequency artifacts \cite{wu_physics-informed_2022, zhu_low-noise_2014}, often requiring multiple through-focus images, which can be time-consuming and technically demanding to get \cite{Waller:10, Zuo2013}. Alternatively, regularization techniques \cite{lustig_sparse_2007, metzler2018prdeep} can be employed, but they are tailored to specific acquisition scenarios and may not capture all image features \cite{zuo_transport_2020}.

In this work, we propose to perform phase imaging by leveraging chromatic aberrations produced in systems with broad-spectrum sources, enabling precise phase estimation with a single exposure. This approach can be implemented with a conventional microscope equipped with a commercially available polychromatic (RGB) camera, making it highly practical for clinical use. However, the use of white light introduces two challenges: a fixed effective defocus distance \cite{waller_phase_2010}, and blurring in the captured diffraction patterns due to source incoherence \cite{zuo_transport_2020, zuo_transport_2015}. To address these issues, we propose a data-driven approach using a deep diffusion model \cite{ho_denoising_2020, saharia_palette_2022} trained on synthetic data \cite{ipads_para}, reducing the reliance on real-world acquisitions \cite{Maggiora2022, Yang2023}. This model provides a robust and practical solution for phase retrieval with polychromatic sources. Our main contributions are:
\begin{itemize}
    \item We propose a novel training paradigm for phase retrieval based on physics-based synthetic data generation \cite{ipads_para}.
    \item We present a flexible and practical method for clinical phase imaging.
    \item We introduce Zero-Mean Diffusion, a variation of diffusion models, along with theoretical insights that explain its improved performance for phase retrieval tasks.
\end{itemize}

\section{Related Work}
\label{sec:related-work}

The induced phase of an object's incident wave $\varphi$ can be determined by analyzing its measured diffraction pattern $I$. For a monochromatic plane wave with wavelength $\lambda$ and  wavenumber $k=\frac{2\pi}{\lambda}$, the Transport-of-Intensity Equation (TIE), establishes a connection between the rate of change of the image in the propagation direction and the lateral phase gradient \cite{zuo_transport_2020}:
\begin{equation}
\label{eq:tie-model}
    -k\frac{\partial I(x,y;z)}{\partial z} = \nabla_{(x,y)}{\cdot}\bigg [I(x,y;z)\nabla_{(x,y)}\varphi(x,y;z)\bigg]
\end{equation}
where $(x, y)$ are lateral spatial coordinates and $z$ is the defocus distance, and $z=0$ is the center of the object. 
Equation (\ref{eq:tie-model}) can be solved for $I$ within the near field of a propagated distance $z$ by the Fresnel diffraction integral. For a wavelength $\lambda$ and denoting $\vx = (x,y)$ this gives
\begin{equation}
\label{eq:fresnel-k}
    I(\vx; z, \lambda)
    =
    \abs{
        A(\vx)e^{i\varphi(\vx)}
        \ast
        \frac{e^{ikz}}{i\lambda z}e^{i\frac{k}{2z}\norm{\vx}_2^2}
    }^2,
\end{equation}
where $A(\vx)$ denotes the amplitude. To recover the phase $\varphi$ from (\ref{eq:tie-model}), two approaches exist. First, assuming the object is predominantly non-absorptive allows for treating it as a phase object, such that $I(\vx;z=0) = I_0$ and (\ref{eq:tie-model}) becomes:
\begin{equation*}
    -k\frac{\partial I(\vx;z)}{\partial z} = I_0\nabla^2_{\vx}{\varphi(\vx;z)}.
\end{equation*}

On the other hand, for an absorptive object, Teague's assumption \cite{Teague:83} yields an alternative formulation. In both cases, we can solve for the phase $\varphi$ by using the Fourier transform in the lateral spatial coordinates and differential operators acting on $I(x, y; z)$ (see details in Appendix \ref{sec:appendix-tie}).

Important numerical challenges render phase recovery an ill-posed problem. Techniques such as Tikhonov regularization \cite{Sixou2013}, Total Variation \cite{cheng_tv}, or Gaussian smoothing \cite{Nakajima1998} can mitigate instabilities, but they complicate the retrieval of low spatial frequency components. Additionally, approximating $\frac{\partial I(x,y;z)}{\partial z}$ through finite differences is susceptible to noise: a small distance yields better detail recovery at the cost of worse Signal-to-Noise Ratio (SNR), while larger distances worsen the paraxial approximation, under which the TIE is valid. \cite{zhu_low-noise_2014, zuo_transport_2020}. These issues can be addressed by increasing the number of acquisitions. While effective, this adds overhead to the method, making it less practical for real-world applications.

Another solution is to employ polychromatic illumination, such as white light. This configuration is useful for phase quantification because the intensity observed at the defocus plane can be interpreted as imaging at multiple defocus distances \cite{zuo_transport_2020}, as we show in Figure \ref{fig:4fsys}. Nevertheless, a naive application of this idea results in a blurred diffraction pattern due to the superposition of the different defocus point spread functions at varying propagation distances. There is also blurring induced by temporal coherence arising from the superposition of diffuse spots at different axial propagation distances, resulting in an image with fewer details. Nonetheless, this configuration can work by using the wavelength-dependent characteristics of Fresnel diffraction and an RGB camera \cite{waller_phase_2010, Zuo2015}. This results in a single-exposure phase quantification method that effectively alleviates the problem of mechanical defocusing in TIE phase retrieval.

Deep learning has recently emerged as a powerful tool for addressing image reconstruction challenges \cite{wang2024multiplephysicsinformedsyntheticdata}, including super-resolution \cite{li_srdiff_2021}, inpainting \cite{Lugmayr2022}, and the reconstruction of under-sampled MRI \cite{Ahishakiye2021} and CT images \cite{Li2022}. In the field of Quantitative Phase Imaging (QPI) \cite{wang_use_2024}, deep learning methods have demonstrated superior phase estimation compared to traditional 2-shot and single-image QPI techniques \cite{zhang_phasegan_2021, wu_physics-informed_2022}. These image reconstruction problems often focus on minimizing the $L_1$ and $L_2$ norms, or other distance metrics. Despite their effectiveness, these methods usually provide blurry solutions and lack a way of modeling the uncertainty in the model \cite{lehtinen2018noise2noise}.

Generative models have gained popularity because they can produce more realistic reconstructions than regression methods. Additionally, they 
model the uncertainty in ill-posed inverse problems, providing a way to sample the probability distribution associated with the problem. Diffusion models are generative likelihood-based models that have been successfully applied to problems like under-sampled MRI image reconstruction \cite{chung2022scorebased}, deblurring \cite{li2023diffusion}, image denoising \cite{chung2023diffusion}, and phase retrieval \cite{dellamaggiora2024conditional, zhang_phasegan_2021}, showing great stability during training and providing state-of-the-art sample quality.

To obtain the phase of an image, we propose using white light as the image source with a polychromatic (RGB) camera-based acquisition. Conveniently, microscopes with such setups are commercially available and widespread in clinical imaging. Since white light is polychromatic, we can quantify phase using the induced defocus stack. Solving the Transport-of-Intensity Equation (TIE) with these measurements is possible in principle because different wavelengths can be translated to different defocus distances under the same wavelength \cite{zuo_transport_2020, waller_phase_2010}. Mathematically, the TIE model given by (\ref{eq:tie-model}) can be reformulated using an auxiliary variable $\xi = \lambda z$, since both $z$ and $\lambda$ are related in equation (\ref{eq:tie-model}). Consequently, the TIE can be expressed in terms of $\xi$ as follows \cite{waller_phase_2010}:
\begin{equation}
\label{eqn:tie-xi}
    -2\pi
    \frac{\partial I(\vx;\xi)}{\partial \xi}
    =
    \nabla_{\vx}{\cdot}
    \big[
        I(\vx;\xi)\nabla_{\vx}\varphi(\vx;\xi)
    \big].
\end{equation}
The main advantage of this formulation is that it allows to perform phase recovery while maintaining a constant defocus plane $z$ \cite{waller_phase_2010}. It also maintains the advantages of the previous TIE formulation, such as only requiring partial temporal coherence. Finally, to solve the equation for $\varphi$, we can use the
color filtering that any digital color camera provides and obtain the required differentially defocused intensity images without the problems introduced by capturing multiple defocus planes.

To circumvent the issues present with using chromatic aberrations as the defocus stack, we propose the use of a diffusion model to solve this problem as an instance of conditioned probabilistic sampling and quantify the phase using polychromatic image acquisition with a white light-based source, achieving high accuracy.

\section{Methods}

Sections \ref{sec:methods-tie} and \ref{sec:methods-dataset} describe our formulation for TIE and the design of a sample dataset for learning phase recovery. Sections \ref{sec:methods-cvdm} and \ref{sec:methods-zmd} describe our use of diffusion models to solve phase recovery as an instance of conditioned sampling.

\subsection{Forward Model}
\label{sec:methods-tie}

\begin{figure}
    \centering
    \includegraphics[width=0.45\textwidth]{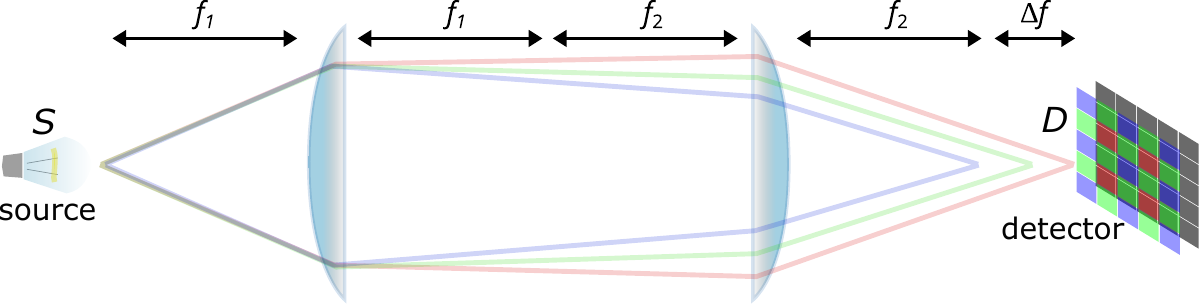}
    \caption{Example of optical system. The defocus distance $\Delta f$ induced by chromatic aberration depends on the focal lengths $f_1$ and $f_2$. The resulting $\Delta f$ for a specific wavelength $\lambda$ corresponds to variable $z$ in (\ref{eq:fresnel-k}).}
    \label{fig:4fsys}
\end{figure}

Consider a thin object $T(\vx) = A(\vx)e^{i\phi(\vx)}$, measured in $z=0$. The equation for the intensity image under partially incoherent illumination \cite{zuo_transport_2020} is 
\begin{align*}
    I(\vx& ; z,\lambda) \\
    &=
    \int S(\mathbf{u})\bigg |
        \int T(\vx')h_{z,\lambda}(\vx-\vx')e^{i2\pi\mathbf{u}\vx'}\mathbf{d}\vx'
    \bigg|^2 \mathbf{d}\mathbf{u} \\
    &=
    \int S(\mathbf{u})I_{\mathbf{u}}(\vx)\mathbf{d}\mathbf{u},
\end{align*}
where $S(\mathbf{u})$ and $I_{\mathbf{u}}$ correspond to the Fourier transforms of the source and the image, respectively, and $h$ is the system's impulse response. This suggests that the observed intensity image is a superposition of all the coherent images obtained from the source. Under the paraxial approximation and assuming that the source is a coherent plane wave, that is, $S(\mathbf{u})=\delta(\mathbf{u})$, the image can be obtained by the Fresnel integral given by equation (\ref{eq:fresnel-k}).

In polychromatic acquisitions, the sensor is usually sensitive to a specific bandwidth. Generally speaking, a polychromatic sensor is built as an array of monochromatic sensors sensitive to three main components: red, green, and blue. Each has a different sensitivity to each wavelength. The sensor's quantum efficiency determines how much energy each frequency transforms into a signal. Given this model, the system can be expressed as a function of the power spectral distribution (PSD) $S_\lambda(\mathbf{u})$ \cite{zuo_transport_2020}. Then, the intensity image is given by
\begin{align}
\label{eqn:fw-model}
    I(\vx& ;z)\notag \\
    &=
    \iint S_\lambda(\mathbf{u})
    \bigg |
        \int T(\vx')h_{z,\lambda}(\vx-\vx')e^{i2\pi\mathbf{u}\vx'}\mathbf{d}\vx'
    \bigg|^2 d\mathbf{u}\mathbf{d}\lambda \notag  \\
    &=
    \iint S_\lambda(\mathbf{u})I_{\mathbf{u},\lambda}(\vx)\mathbf{d}\mathbf{u}d\lambda.
\end{align}
This means that the resulting intensity image is equal to the superposition of the different power densities within the defined bandwidth, i.e., the average of all images produced by the different coherent wavelength components.

\subsection{Dataset Simulation}
\label{sec:methods-dataset}

To simulate polychromatic acquisitions, we use the ImageNet dataset \cite{Deng2009}. Each image undergoes a grayscale conversion and is treated as a pure phase object. Then, we simulate the image under a fixed defocus distance $z$ randomly selected between $0.1\mu$m and $3\mu$m. For the simulation, we use the Fresnel propagator for each wavelength ranging from $400$nm to $700$nm with equal contribution, discretizing with $6$nm increments. For each increment, the source is viewed as a coherent plane wave. Since the quantum efficiency of the sensor influences the signal captured by the detector, we simulate a Gaussian distribution for each channel centered at $\lambda_r = 630$nm, $\lambda_g = 550$nm, and $\lambda_b = 450$nm respectively. We model the quantum efficiency of the sensor using the following function:
\begin{equation*}
    Q_c(\lambda) = e^{-\frac{(\lambda - \lambda_c)^2}{2\sigma_c^2}}
\end{equation*}
where $\sigma_c$ is a parameter representing the sensitivity of the sensor to the corresponding wavelength. To model the PSD for a specific channel $c$ we assume the model $S_\lambda(\mathbf{u})=\delta_\lambda(\mathbf{u})Q_c(\lambda)$. Replacing in (\ref{eqn:fw-model}), we get a single integral with respect to the $\lambda$ variable, which is approximated as:
\begin{equation*}
    I_c(\vx; z)
    = \frac{1}{W}\sum_{i=1}^{W} Q_c(\lambda_i) I(\vx;z,\lambda_i),
\end{equation*}
where $c$ denotes the respective channel. The wavelength range is divided into $6$nm-sized intervals, and the left endpoint $\lambda_i$ and size $1/W$ of each interval are used for a Riemann sum approximation of the integral in (\ref{eqn:fw-model}). Finally, white Gaussian noise is added to the image to simulate the noise in real acquisitions. This procedure allows us to construct a dataset of polychromatic acquisitions along with the correspondent phase of the images, which we use to train a deep learning model for phase reconstruction. Our methodology is based on diffusion models and explained in Sections \ref{sec:methods-cvdm} and \ref{sec:methods-zmd}. See Figure \ref{fig:sim-pipeline} in the appendix for more details.

\subsection{Conditional Denoising Diffusion Models}
\label{sec:methods-cvdm}

Phase recovery is considered ill-posed, meaning that a given acquisition may be consistent with multiple different phases. Hence, we frame the problem as an instance of conditioned sampling, which explicitly accounts for the uncertainty of the recovery. Suppose we have access to a dataset of input-output pairs, denoted as $\mathcal{D} = \{(\bx_i, \by_i)\}_{i=1}^N$. We model these samples as coming from an unknown joint distribution $\df_\text{data}(\bx, \by)$. In our context, $\bx$ corresponds to a sample drawn from polychromatic image acquisition, and $\by$ corresponds to the phase of the image.

Our objective then becomes to learn an approximate way of sampling $\by \sim \df_\text{data}(\cdot | \bx)$, for any given $\bx$. To do this we propose the use of Conditional Variational Diffusion Models (CVDM) \cite{dellamaggiora2024conditional}. In this framework, the basic setup is the existence of a \textit{forward process} $\{\by_t\}_{t\in [0, \ft]}$, where $\by_0 \sim \df_{\text{data}}(\cdot | \bx)$ and the evolution of the process is given by the following stochastic differential equation (SDE) \cite{song_score-based_2021}:
\begin{equation}
\label{eq:forward-sde}
    d\by_t = -\frac{1}{2}\beta(t, \bx)\by_t dt + \sqrt{\beta(t, \bx)}dW_t,
\end{equation}
where $\beta(t, \bx)$ is a \textit{variance schedule} function. The intuitive idea is that noise is gradually added to an initial sample $\by_0 \sim \df_{\text{data}}(\cdot | \bx)$, such that $\by_\ft$ at the end of the process is close to a $\mathcal{N}(0, I)$ variable. For this process, we denote the density at time $t$ by $\df_t(\cdot | \bx)$. The interesting part is learning to (approximately) reverse this forward process. This gives rise to a \textit{reverse process} $\{\bz_t\}_{t\in[0, \ft]}$, whose density at time $t$ we denote by $\dr_t(\cdot | \bx)$, where we aim to have $\dr_t \approx \df_{\ft-t}$.

Specifically, the reverse process starts with $\dr_0(\cdot | \bx) = \df_\ft(\cdot | \bx)$. It is known that $\df_\ft$ approximates a $\mathcal{N}(0, I)$ distribution, so this step is usually implemented as $\bz_0 \sim \mathcal{N}(0, I)$. After sampling $\bz_0$, the reverse dynamics (given by a different SDE) are simulated until we get $\bz_\ft \sim \dr_\ft(\cdot | \bx) \approx \df_\text{data}(\cdot | \bx)$. This can be done in different ways \cite{ho_denoising_2020, song_score-based_2021}, usually via a score or noise-prediction model. In CVDM, a noise-prediction model is learned, and the reverse process is simulated via ancestral sampling \cite{dellamaggiora2024conditional}.

We now explain the way in which CVDM learns and simulates the reverse process. Given equation (\ref{eq:forward-sde}) it can be shown \cite{dellamaggiora2024conditional} that the process $\{\by_t\}$ conditioned on $\by_0$ evolves such that
\begin{equation}
\label{eq:forward-density-t-conditioned-on-0}
    \df_t(\cdot | \by_0 = \by, \bx)
    =
    \mathcal{N}\left(\sqrt{\gamma(t, \bx)} \by, (1-\gamma(t, \bx)) I\right)
\end{equation}
where $\gamma(t, \bx) = e^{-\int_0^t \beta(s, \bx) ds}$. Here, $\gamma$ is also referred to as a \textit{variance schedule} function. It captures the longer-term dynamics of the forward process, while $\beta$ controls the instantaneous rate of change according to (\ref{eq:forward-sde}). Interestingly, for any $s < t$ we have that $\df_t(\cdot | \by_s = y_s, \bx)$ is also normally distributed, and Bayes' theorem implies that the posterior distribution is a Gaussian too \cite{kingma2021variational, dellamaggiora2024conditional}:
\begin{align}
    \df_s(\cdot |& \by_t = y_t, \by_0 = \by, \bx) \nonumber \\
    &=
    \mathcal{N}\Big(
        \alpha(y_t, \by, s, t, \bx),
        \sigma(s, t, \bx)I
    \Big)
    \label{eq:forward-density-s-conditioned-on-t-0}
\end{align}
for $s < t$. Here, the functions $\alpha$ and $\sigma$ have an explicit form in terms of the schedule functions $\beta$ and $\gamma$. In CVDM, this motivates the modeling of the reverse process such that
\begin{align}
    \dr_{\ft-s}(\cdot |& \bz_{\ft-t} = y_t, \bx) \nonumber \\
    &=
    \mathcal{N}\Big(
        \alpha(y_t, \by_\nu, s, t, \bx),
        \sigma(s, t, \bx)I
    \Big)
    \label{eq:backward-density-t-conditioned-on-s}
\end{align}
for $s<t$. The only difference between (\ref{eq:forward-density-s-conditioned-on-t-0}) and (\ref{eq:backward-density-t-conditioned-on-s}) is that $\by$ is unavailable in (\ref{eq:backward-density-t-conditioned-on-s}) and hence replaced by a learned model $\by_\nu(y_t, t, \bx)$. Equation (\ref{eq:backward-density-t-conditioned-on-s}) is the key tool used in CVDM to simulate the reverse process. Concretely, first a partition $(t_0, \ldots, t_N)$ of $[0, \ft]$ is chosen. Then, $\bz_0 \sim \mathcal{N}(0, I)$ is sampled. Finally, equation (\ref{eq:backward-density-t-conditioned-on-s}) is used to iteratively sample $\bz_{t_{i+1}}$ given $\bz_{t_i}$, until $\bz_\ft = \bz_{t_N}$ is obtained.

The main thing left to explain is how the predictor $\by_\nu$ is trained. First, notice that equation (\ref{eq:forward-density-t-conditioned-on-0}) leads to the following parameterization. Conditioned on $\by_0=\by$, the variable $\by_t$ can be written as
\begin{equation}
\label{eq:forward-reparameterization}
    \by_t(\by, \eps)
    =
    \sqrt{\gamma(t, \bx)} \by + \sqrt{1-\gamma(t, \bx)}\eps
\end{equation}
where $\eps\sim\mathcal{N}(0, I)$. A natural approach to learn $\by_\nu$ is to maximize the log-likelihood term $\expec{(\bx, \by) \sim \df_\text{data}}{\dr_\ft(\by | \bx)}$. CVDM and other diffusion models use a variational lower bound to get a more tractable expression. In the end, a diffusion loss term is obtained, which after some simplification corresponds to \cite{dellamaggiora2024conditional}:
\begin{equation*}
    \mathcal{L}_\text{diff}(\bx, \by)
    =
    \frac{1}{2}
    \expec{
        \eps  
        ,t  
    }{
        \norm{\by - \by_\nu(\by_t(\by, \eps), t, \bx)}_2^2
    }.
\end{equation*}
This measures how well we can guess $\by$ given the value of $\by_t$, when the process starts from $\by_0 = \by$. The goal then becomes to minimize $\expec{(\bx, \by) \sim \df_\text{data}}{\mathcal{L}_\text{diff}(\bx, \by)}$. Experimental results \cite{ho_denoising_2020} show that re-parameterizing the model to guess the Gaussian noise $\eps$ improves accuracy. Hence, CVDM learns a noise-prediction model $\eps_\nu$, via a loss term $\mathcal{L}_\text{noise}$ equivalent to $\mathcal{L}_\text{diff}$.

A key advantage of CVDM in comparison to SR3 \cite{saharia2021image} and other conditioning diffusion methods \cite{dhariwal_diffusion_2021} is that it allows for automatically learning the schedule functions. This avoids the need to fine-tune $\beta$ and $\gamma$ by hand, which is costly. However, when learning these functions it is necessary to ensure some basic properties, which are encoded in two loss terms $\mathcal{L}_\beta(\bx)$ and $\mathcal{L}_\text{prior}(\bx, \by)$.
Moreover, a regularization term $\mathcal{L}_\gamma(\bx)$ is needed to avoid pathological schedules. The final loss function used in CVDM is hence given by
\begin{align*}
    \mathcal{L}_\text{CVDM}
    &=
    \mathbb{E}
    _{(\bx,\by) \sim \df_\text{data}}
    \big[ \\
    &\mathcal{L}_\beta(\bx)
        + \mathcal{L}_\text{prior}(\bx, \by)
        + \mathcal{L}_\text{noise}(\by, \bx)
        + a\mathcal{L}_{\gamma}(\bx)
    \big],
\end{align*}
where $a$ is the weight of the regularization term $\mathcal{L}_{\gamma}$. In practice, a Monte Carlo estimator of $\mathcal{L}_\text{CVDM}$ is optimized by using the available $\{(\bx_i, \by_i)\}_{i=1}^N$ samples. For more details, see Appendix \ref{sec:appendix-cvdm}.

\subsection{Zero-Mean Diffusion}
\label{sec:methods-zmd}

A standard practice in diffusion models is to normalize all data to the $[-1, 1]$ interval. This improves numerical stability but becomes an issue for quantitative imaging because the reconstructed data loses direct physical interpretation. In this work, we propose an alternative to normalization called Zero-Mean Diffusion (ZMD), which avoids this problem and has an interesting theoretical basis.

Concretely, we use a simple regression model to learn the expected value of the sample data, in conjunction with a diffusion model which predicts the difference between the regression model and the actual sample. This type of strategy has been used before, with several works showing that modeling residuals instead of the full objective can be simpler and lead to better results \cite{li_srdiff_2021, yang_diffusion_2022}. In addition, theoretical analysis suggests a reason for the high reconstruction quality achieved by ZMD.

As explained in Section \ref{sec:methods-cvdm}, we are interested in sampling from a conditional distribution $\df_\text{data}(\cdot | \bx)$. The approach via diffusion models is to define a stochastic forward process $\{\by_t\}_{t\in [0, \ft]}$, which evolves according to (\ref{eq:forward-sde}) and such that $\df_0(\cdot | \bx) = \df_\text{data}(\cdot | \bx)$. The forward process can be approximately reversed by starting at $\bz_0 \sim \dr_0(\cdot | \bx)$ with $\dr_0 = \mathcal{N}(0, I) \approx \df_\ft$, and then simulating the reverse dynamics until we get $\bz_\ft \sim \dr_\ft(\cdot | \bx)$ with $\dr_\ft \approx \df_0 = \df_\text{data}$, which is the distribution of interest.

We now formally describe ZMD. We define a new stochastic (forward) process $\big \{ \ty_t \big \}$ given by $\ty_t = \by_t - \mu_t$ for $t\in [0, \ft]$ where $\mu_t = \expec{\df_t(\cdot | \bx)}{\by_t}$ is the expected value of $\by_t$. Notice that, similar to our convention with $\by_t$ and $\bz_t$, we omit explicit reference to $\bx$ when talking about $\ty_t$ and $\mu_t$. We do this to simplify notation since the conditioning data $\bx$ remains fixed. An important property of the centered process $\big\{\ty_t\big\}$ is that it also evolves according to (\ref{eq:forward-sde}).
\begin{proposition}
\label{prop:zero-mean-sde}
    The process $\big\{\ty_t\big\}$ given by $\ty_t = \by_t - \mu_t$ evolves according to the following SDE:
    \begin{equation}
    \label{eq:forward-sde-centered}
        d\ty_t = -\frac{1}{2}\beta(t, \bx) \ty_t dt + \sqrt{\beta(t, \bx)} dW_t.
    \end{equation}
\end{proposition}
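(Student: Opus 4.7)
The plan is to derive an ODE for the deterministic function $\mu_t = \expec{\df_t(\cdot|\bx)}{\by_t}$ by taking expectations of the forward SDE (\ref{eq:forward-sde}), and then subtract it from the SDE for $\by_t$ to obtain an SDE for $\ty_t = \by_t - \mu_t$.

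First I would write (\ref{eq:forward-sde}) in integral form,
\begin{equation*}
    \by_t = \by_0 - \frac{1}{2}\int_0^t \beta(s, \bx)\by_s\,ds + \int_0^t \sqrt{\beta(s,\bx)}\,dW_s,
\end{equation*}
and take expectations on both sides. Under the usual integrability conditions, the Itô integral $\int_0^t \sqrt{\beta(s,\bx)}\,dW_s$ is a martingale with mean zero, and Fubini's theorem lets me swap the expectation with the Lebesgue integral in $s$. This yields
\begin{equation*}
    \mu_t = \mu_0 - \frac{1}{2}\int_0^t \beta(s,\bx)\mu_s\,ds,
\end{equation*}
so $\mu_t$ is differentiable in $t$ and satisfies the ODE $d\mu_t = -\tfrac{1}{2}\beta(t,\bx)\mu_t\,dt$ (with no stochastic part, since $\mu_t$ is deterministic given $\bx$).

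Next, since $\mu_t$ has finite variation, the differential of $\ty_t = \by_t - \mu_t$ is simply $d\ty_t = d\by_t - d\mu_t$ (no additional Itô correction is needed, as $\mu_t$ is smooth and has zero quadratic variation). Substituting the SDE for $\by_t$ and the ODE for $\mu_t$ gives
\begin{align*}
    d\ty_t
    &= -\tfrac{1}{2}\beta(t,\bx)\by_t\,dt + \sqrt{\beta(t,\bx)}\,dW_t + \tfrac{1}{2}\beta(t,\bx)\mu_t\,dt \\
    &= -\tfrac{1}{2}\beta(t,\bx)(\by_t - \mu_t)\,dt + \sqrt{\beta(t,\bx)}\,dW_t \\
    &= -\tfrac{1}{2}\beta(t,\bx)\ty_t\,dt + \sqrt{\beta(t,\bx)}\,dW_t,
\end{align*}
which is exactly (\ref{eq:forward-sde-centered}).

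The only mildly delicate step is the exchange of expectation and integration used to derive the ODE for $\mu_t$; this requires mild regularity assumptions on $\beta(\cdot,\bx)$ and on the moments of $\by_t$ (e.g., $\beta$ bounded on $[0,\ft]$ and $\by_0$ with finite second moment), both of which are implicit in the standard diffusion-model setup. Once the ODE for $\mu_t$ is in hand, the remainder of the argument is a one-line algebraic cancellation, so I would expect the write-up to be short.
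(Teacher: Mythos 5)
Your proof is correct, and it reaches the key intermediate fact by a different route than the paper. The paper first establishes Lemma \ref{lemma:mu-time-t} ($\mu_t = \mu_0\sqrt{\gamma(t,\bx)}$) from the explicit Gaussian transition density (\ref{eq:forward-density-t-conditioned-on-0}), then combines it with the variance-preserving schedule relation $\partial_t\gamma = -\beta\gamma$ to obtain $\frac{d\mu_t}{dt} = -\tfrac{1}{2}\beta(t,\bx)\mu_t$, and finally applies It\^o's rule to $F(y,t)=y-\mu_t$. You instead derive the same ODE for $\mu_t$ directly by taking expectations of the integral form of (\ref{eq:forward-sde}), using the martingale property of the It\^o integral and Fubini; the concluding cancellation $d\ty_t = d\by_t - d\mu_t$ is then identical in both arguments (your observation that no It\^o correction is needed because $\mu_t$ is deterministic and of finite variation is the same point the paper handles via $\partial^2 F/\partial y^2 = 0$). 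Your route is more self-contained --- it does not rely on the closed-form marginal (\ref{eq:forward-density-t-conditioned-on-0}) or on the schedule equation, so it works for any drift of this linear form --- at the cost of the (mild, and correctly flagged) integrability hypotheses needed to exchange expectation and integration. The paper's route buys the explicit formula $\mu_t = \mu_0\sqrt{\gamma(t,\bx)}$, which it reuses later in Lemmas \ref{lemma:density-tilde-conditioned} and \ref{lemma:score-tilde-conditioned} and in the proof of Theorem \ref{thm:KL-upper-bound-inequality}, so the lemma is not redundant in the broader development even though your proof of the proposition bypasses it.
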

The proof can be found in Appendix \ref{sec:appendix-proofs}. Proposition \ref{prop:zero-mean-sde} provides an alternative sampling scheme for $\df_0(\cdot | \bx)$ via diffusion models. The idea is to use CVDM as outlined in the previous section but applied now to the $\big\{\ty_t\big\}$ process. We use an analogous notation for this method version so that $\tf_t(\cdot | \bx)$ denotes the density for this process at time $t$. Our main goal is learning to simulate a reverse process $\big\{\tz_t\big\}$, characterized at time $t$ by a density $\tr_t(\cdot | \bx)$ and where the key property is $\tr_t \approx \tf_{\ft-t}$.

Proposition \ref{prop:zero-mean-sde} guarantees that the dynamics of $\big\{\ty_t\big\}$ are exactly the same as those of $\{\by_t\}$, which means CVDM can be implemented in the same way as before, that is, by using the same parameterization and minimizing the loss function $\mathcal{L}_\text{CVDM}$. Nonetheless, there is a caveat. As described in the previous section, in practice, a Monte Carlo estimator of $\mathcal{L}_\text{CVDM}$ is minimized by using the available $\{(\bx_i, \by_i)\}_{i=1}^N$ samples. For ZMD, the Monte Carlo estimator would need access to $\{(\bx_i, \by_i - \mu_0)\}_{i=1}^N$ samples. Hence, ZMD needs the value of $\mu_0$ or at least a good estimate.

Our solution is to define a learnable function $\mu_\theta(\bx)$ to approximate $\mu_0$. To make sure the approximation is accurate, we use the following loss term
\begin{equation*}
    \mathcal{L}_\text{mean}
    =
    \expec{
        (\bx, \by)\sim \df_\text{data}
    }{
        \norm{\by - \mu_\theta(\bx)}_2^2
    },
\end{equation*}
which is now included in the full ZMD loss function:
\begin{equation}
\label{eqn:loss-zmd}
    \mathcal{L}_\text{ZMD}
    =
    \mathcal{L}_\text{CVDM} + \omega\mathcal{L}_\text{mean},
\end{equation}
where $\omega$ controls the relative weights of the terms. For $\mathcal{L}_\text{mean}$, the expected value is calculated with respect to $(\bx, \by)\sim \df_\text{data}$. On the other hand, for $\mathcal{L}_\text{CVDM}$ the expected value is with respect to $(\bx, \ty)\sim \tf_\text{data}$, where $\tf_\text{data}$ represents the distribution of $(\bx, \by - \mu_0)$. This is because CVDM is now learning to reverse the centered process $\big\{\ty_t\big\}$. In practice, we optimize over many iterations. For each one, we simulate $(\bx, \ty)\sim \tf_\text{data}$ by sampling $(\bx, \by)$ from our dataset, and setting $\ty = \by - \mu_\theta(\bx)$. As $\mu_\theta$ improves, this procedure becomes increasingly better at approximating the sampling we need.

Once all models have been trained, we can use ZMD to approximately sample from $\df_0(\cdot | \bx) = \df_\text{data}(\cdot | \bx)$ by following an almost identical procedure as that of standard CVDM:
\begin{enumerate}
    \item Equation (\ref{eq:forward-sde-centered}) implies $\tf_\ft \approx \mathcal{N}(0, I)$. Hence, a good approximation for $\tr_0$ is $\mathcal{N}(0, I)$.
    \item After choosing a time discretization $(t_0, \ldots, t_N)$ of the $[0, \ft]$ interval, equation (\ref{eq:backward-density-t-conditioned-on-s}) can be used used to iteratively sample $\tz_{t_{i+1}}$ given $\tz_{t_i}$, until getting $\tz_\ft = \tz_{t_N}$. If the noise-prediction model $\tilde{\eps}_\nu$ is accurate, then it holds that $\tz_\ft \sim \tr_\ft$ with $\tr_\ft \approx \tf_0$.
\end{enumerate}

The previous two steps produce a sample $\tz_\ft \mathrel{\dot{\sim}} \tf_0$. To get an (approximate) sample from the distribution of interest $\df_0(\cdot | \bx)$, we compute $\tz_\ft + \mu_\theta(\bx)$. This can be understood intuitively. Since $\mu_0$ (or its estimate $\mu_\theta$) is subtracted at the start of the forward process, it must be added back at the end of the reverse process. For the mathematical formalization of this idea, please see Appendix \ref{sec:appendix-proofs}.

As shown in Section \ref{sec:results}, ZMD works very well as an alternative to the standard practice of normalizing data to the $[-1, 1]$ interval. Hence, it provides a robust way of using diffusion models for quantitative data. Besides these strong empirical results, there is a deeper theoretical consideration underlying the performance of ZMD. To explain this, we will use $\tr_\ft^{\mu_0}$ to denote the distribution of the random variable $\tz_\ft + \mu_0$ where $\tz_\ft \sim \tr_\ft$.

Standard CVDM (Section \ref{sec:methods-cvdm}) can be used to sample from a distribution $\dr_\ft$, while ZMD can be used to sample from a distribution $\tr_\ft^{\mu_0}$. Both $\dr_\ft$ and $\tr_\ft^{\mu_0}$ should approximate $\df_0(\cdot | \bx) = \df_\text{data}(\cdot | \bx)$, so it is natural to ask which method leads to a better approximation. To answer this question, we could compare $D_\text{KL}(\df_0 || \dr_\ft)$ with $D_\text{KL}(\df_0 || \tr_\ft^{\mu_0})$. While these divergences cannot be computed analytically, there are known upper bounds for them, which we denote $M(\dr_\ft)$ and $M(\tr_\ft^{\mu_0})$, respectively. Our main theoretical result is that the upper bound for ZMD is always better.

\begin{theorem}
\label{thm:KL-upper-bound-inequality}
    For the stochastic processes as defined in Sections \ref{sec:methods-cvdm} and \ref{sec:methods-zmd}, and for any distribution $\df_0$, it holds that $M(\tr_\ft^{\mu_0}) \leq M(\dr_\ft)$.
\end{theorem}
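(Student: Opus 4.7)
The plan is to unpack the upper bounds $M(\dr_\ft)$ and $M(\tr_\ft^{\mu_0})$ into the standard variational decomposition that underlies $\mathcal{L}_\text{CVDM}$. Each such bound splits into three pieces: a prior mismatch term of the form $D_\text{KL}(\df_\ft(\cdot \mid \bx) \,\|\, \mathcal{N}(0, I))$, a sum of per-step diffusion/score-matching KL divergences, and a boundary reconstruction term. The corresponding decomposition for $M(\tr_\ft^{\mu_0})$ has exactly the same structure, but with $\df$ replaced by $\tf$ throughout. I would first write both decompositions explicitly so that the comparison can be carried out term by term.

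Next, I would invoke Proposition~\ref{prop:zero-mean-sde} to argue that the diffusion and reconstruction components of the two bounds coincide. Because $\{\ty_t\}$ satisfies the same SDE as $\{\by_t\}$, its transition kernels $\tf_s(\cdot \mid \ty_t, \ty_0, \bx)$ are related to $\df_s(\cdot \mid \by_t, \by_0, \bx)$ by a deterministic translation depending only on the $\mu_s$ terms. Since KL divergence is invariant under a common translation of both arguments, and since the reverse model for ZMD is applied to $\tz_t$ and then shifted by $\mu_0$, each per-step KL term and the reconstruction term remain identical between the two bounds. Hence $M(\dr_\ft) - M(\tr_\ft^{\mu_0})$ reduces to the difference between the two prior terms.

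The heart of the proof is then a clean Gaussian identity: for any distribution $q$ on $\mathbb{R}^d$ with finite mean $\mu$, letting $\tilde{q}$ denote the law of $X - \mu$ when $X \sim q$, one has
\[
D_\text{KL}(q \,\|\, \mathcal{N}(0, I))
= D_\text{KL}(\tilde{q} \,\|\, \mathcal{N}(0, I)) + \tfrac{1}{2}\|\mu\|^2.
\]
This is verified by splitting KL into differential entropy and cross-entropy parts, using the translation invariance of entropy, and expanding $\expec{q}{\|X\|^2} = \expec{\tilde{q}}{\|\tilde{X}\|^2} + \|\mu\|^2$. Applying this identity with $q = \df_\ft(\cdot \mid \bx)$, which has mean $\mu_\ft$, shows that the CVDM prior term exceeds the ZMD prior term by exactly $\tfrac{1}{2}\|\mu_\ft\|^2 \geq 0$, which gives the theorem.

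The main obstacle I anticipate is bookkeeping rather than any deep idea: carefully verifying that the per-step diffusion losses actually match under the centering, and that the $+\mu_0$ shift applied at the end of the ZMD reverse process is consistent with the translated variational bound. In particular, one must confirm that shifting both the forward target and the reverse output by the same deterministic $\mu_0$ preserves every intermediate KL term, so that the only surviving difference between the two upper bounds is the single non-negative quantity $\tfrac{1}{2}\|\mu_\ft\|^2$, with equality holding precisely when $\mu_\ft = 0$.
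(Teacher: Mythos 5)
Your proposal is essentially correct and, reassuringly, lands on exactly the same quantitative gap as the paper, but it reaches it by a somewhat different route. The paper's $M$ is not a discrete per-step ELBO: it is the continuous-time bound of Franzese et al., a prior term plus two score-matching integrals $I_\nu$ and $I$, with the prior term $D_\text{KL}(\df_\ft \,\|\, \mathcal{N}(0,I))$ further replaced via Benton et al.\ by $\tfrac{1}{2}\gamma(\ft,\bx)\,\mathbb{E}_{\df_0(\cdot|\bx)}\big[\norm{\by_0}_2^2\big] + \tfrac{1}{2}d(1-\gamma(\ft,\bx))$. The paper then shows the score integrals can be made equal by pairing each score model $s_\nu$ with $\ts_\nu(y,t,\bx) = s_\nu(y+\mu_t,t,\bx)$ and substituting $u = y+\mu_t$ --- the same translation-invariance idea you invoke for your per-step terms --- and finally compares second moments at time $0$: $\mathbb{E}_{\tf_0}\big[\norm{\ty_0}_2^2\big] = \mathbb{E}_{\df_0}\big[\norm{\by_0}_2^2\big] - \norm{\mu_0}_2^2$. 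You instead compare the prior KLs at time $\ft$ directly through the identity $D_\text{KL}(q\,\|\,\mathcal{N}(0,I)) = D_\text{KL}(\tilde q\,\|\,\mathcal{N}(0,I)) + \tfrac{1}{2}\norm{\mu}_2^2$; since $\mu_\ft = \sqrt{\gamma(\ft,\bx)}\,\mu_0$ (the paper's Lemma 1), your gap $\tfrac{1}{2}\norm{\mu_\ft}_2^2$ coincides with the paper's $\tfrac{1}{2}\gamma(\ft,\bx)\norm{\mu_0}_2^2$. Your identity is more elementary and makes the mechanism transparent (centering removes exactly the mean-induced mismatch with the standard Gaussian prior), whereas the paper's argument is tied to the specific published bound it names $M$; to prove the theorem as literally stated you would still need to carry your moment expansion inside that particular bound, which it handles immediately. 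Two small points to tighten: the score/per-step terms are equal only under the explicit pairing of the centered model with a translate of the uncentered one (worth stating rather than leaving implicit), and equality in the final inequality holds when $\mu_0 = 0$ (or $\gamma(\ft,\bx)=0$), not merely when $\mu_\ft = 0$ as an independent condition.
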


The proof can be found in Appendix \ref{sec:appendix-proofs}. We conjecture that this theoretical advantage of using a centered forward process may be one of the reasons for the good performance we get when using ZMD. Notice that this only holds if we have access to an accurate model $\mu_\theta$ of $\mu_0$.

\section{Results}
\label{sec:results}

To evaluate our model, we conduct three distinct experiments. First, we generate a set of synthetic samples analogous to those in the training set. Second, we acquire four through-focus stacks (which allow us to compute a ground truth) and subsequently test our method on these images. Finally, we use an uncurated clinical dataset comprised of real-world clinical images. While ground truth data is unavailable for this dataset, we compare our results with reference values reported in the literature. For the first two experiments, we include 2-shot (using two images to measure the phase) acquisition methods as additional reference values. In all reported results, ``2s'' stands for the 2-shot modality, ``p'' for the polychromatic (single exposure) modality, and ``MP'' refers to the metrics obtained by the regression model $\mu_\theta$, which serves as a baseline.

\subsection{Synthetic Images}

To validate the model's performance, we use the HCOCO dataset \cite{hcoco} to simulate synthetic examples. Table \ref{table:qpi-syn} shows the results, and Figure \ref{fig:qpi-synthetic} shows a qualitative comparison of the methods.

\begin{figure}
  \centering
  \includegraphics[width=0.75\columnwidth]{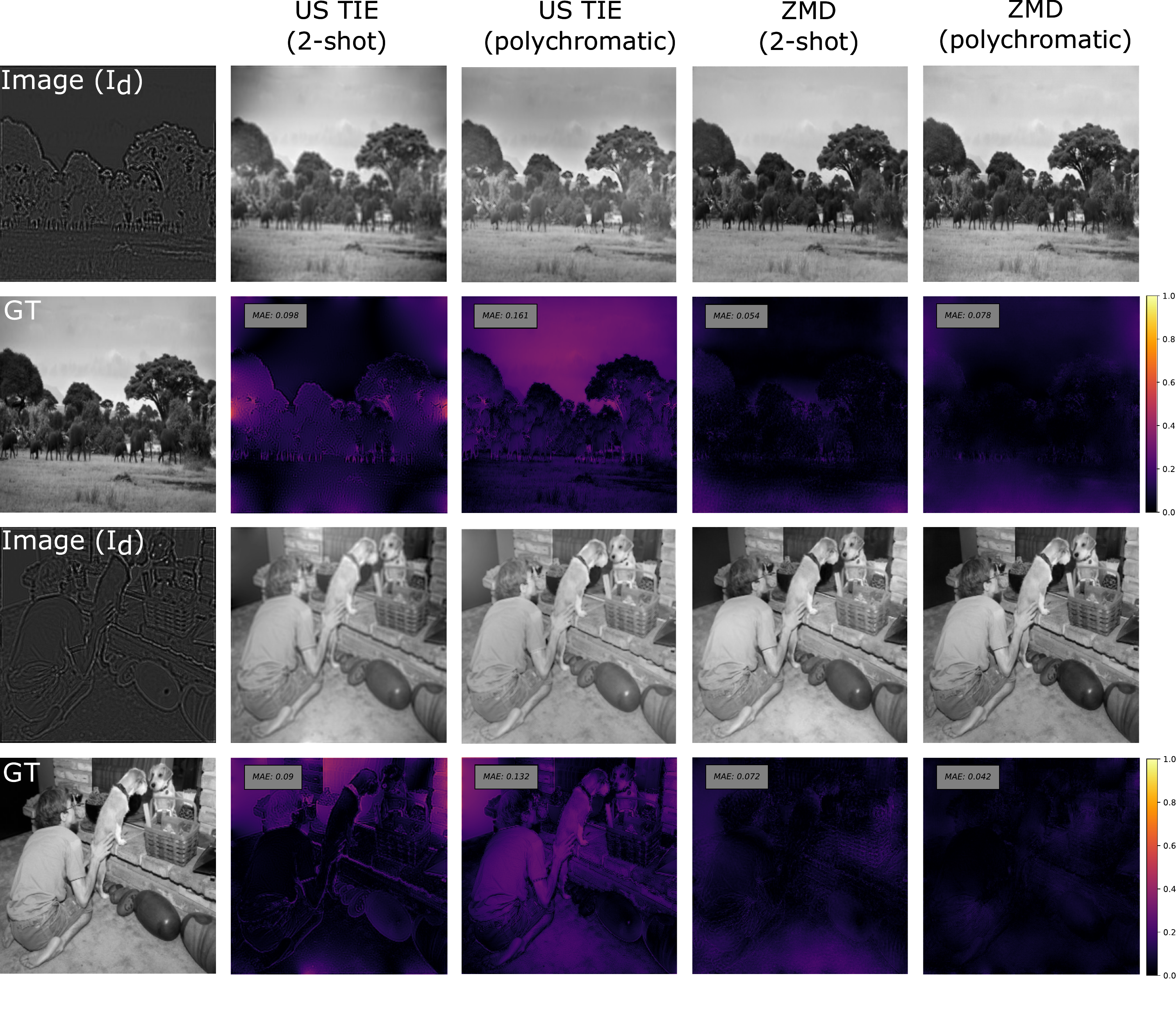}
  \caption{QPI methods assessed using synthetic images from HCOCO.}
  \label{fig:qpi-synthetic}
\end{figure}

\begin{figure}
  \centering
  \includegraphics[width=0.75\columnwidth]{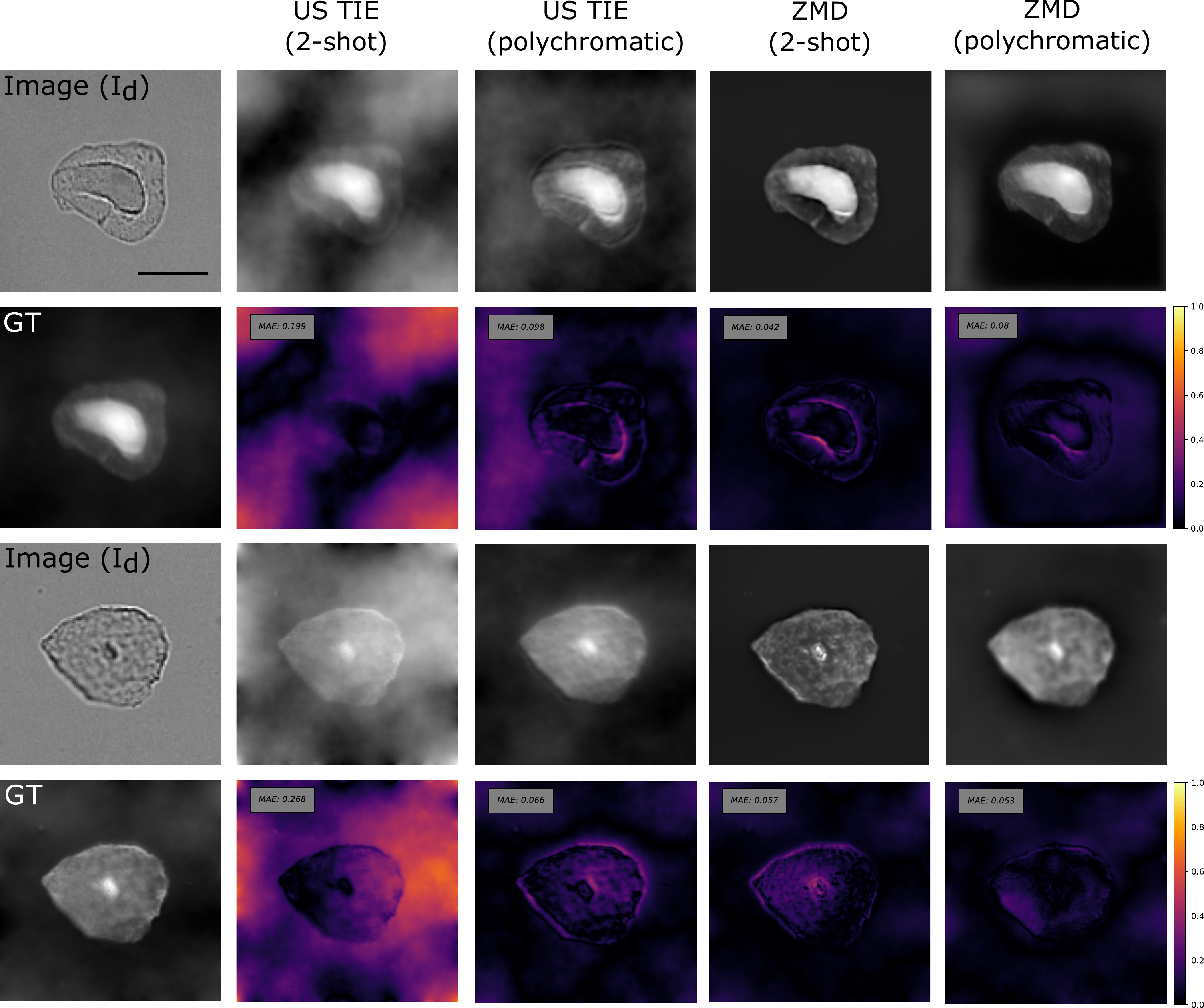}
  \caption{QPI methods assessed using clinical microscopy images depicting two overlapping (two upper rows) and an individual (two lower rows) epithelial cells. Scale bar (upper leftmost image) is 50 $\mu$m.}
  \label{fig:qpi-clinical}
\end{figure}

\begin{table}
    \centering
    \caption{Performance metrics for Synthetic HCOCO (average value and standard deviation over dataset).}
    \label{table:qpi-syn}
    \tiny
        \begin{tabular}{llllll}
            \toprule
            Metric / Model & US TIE (2s) & US TIE (p) &MP (p) & ZMD (2s) & ZMD (p) \\
            \midrule
            MS-SSIM ($\uparrow$) & 0.91 $\pm$ 0.05 & 0.81 $\pm$ 0.09 & 0.95 $\pm$ 0.02&0.94 $\pm$ 0.03 & \textbf{0.97 $\pm$ 0.02} \\
            MAE ($\downarrow$) & 0.11 $\pm$ 0.04 & 0.14 $\pm$ 0.05 & \textbf{0.06 $\pm$ 0.02}&0.08 $\pm$ 0.02 & \textbf{0.06 $\pm$ 0.02} \\
            \bottomrule
        \end{tabular}
\end{table}

\begin{table}
    \centering
    \caption{MS-SSIM on Through-Focus Brightfield Images.}
    \label{table:ssim-through-focus}
    \tiny
    \begin{tabular}{llllll}
        \toprule
        Sample & US TIE (2s) & US TIE (p) &MP (p) & ZMD (2s) & ZMD (p) \\
        \midrule
        1 & 0.61 & 0.65 & 0.78&\textbf{0.91} & 0.82 \\
        2 & 0.63 & 0.78 & 0.88&0.86 & \textbf{0.89} \\
        3  & 0.74 & 0.68 & 0.83&0.81 & \textbf{0.88} \\
        4 & 0.78 & 0.74 & 0.84&0.75 & \textbf{0.86} \\
        \bottomrule
    \end{tabular}
\end{table}

\begin{table}
    \centering
    \caption{MAE on Through-Focus Brightfield Images.}
    \label{table:mae-through-focus}
    \tiny
    \begin{tabular}{llllll}
        \toprule
        Sample & US TIE (2s) & US TIE (p)& MP (p) & ZMD (2s) & ZMD (p) \\
        \midrule
        1 & 0.20 & 0.10 & 0.09 &\textbf{0.04} & 0.08 \\
        2 & 0.27 & 0.07 & 0.06 &0.06 & \textbf{0.05} \\
        3 & 0.18 & 0.13 &\textbf{0.10}  &0.11 & \textbf{0.10} \\
        4 & 0.18 & 0.10&\textbf{0.09}  & 0.10 & \textbf{0.09} \\
        \bottomrule
    \end{tabular}
\end{table}

\begin{figure*}
    \centering
    \includegraphics[width=0.95\textwidth]{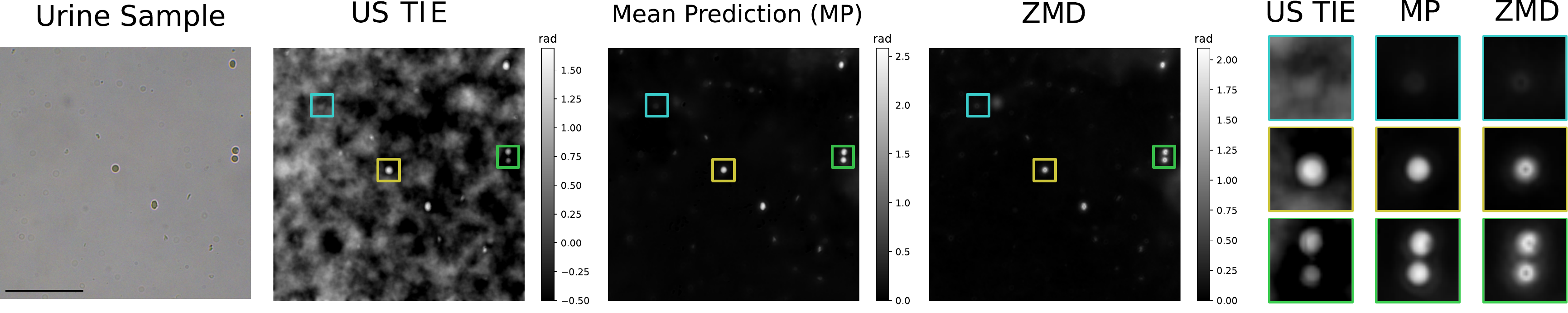}
    \caption{Urine sample with red blood cells. From left to right: US TIE phase reconstruction, Mean Prediction, and Zero-Mean Diffusion prediction. Regions of the image are enlarged to better show reconstruction details. Phase reconstructions are quantitative, with values between 0 and 2.5 radians. Scale bar is 200 $\mu$m.}
    \label{fig:uti-rbc}
\end{figure*}

\subsection{Through-focus Images}
We assessed our approach using four stacks of clinical urine microscopy samples, each acquired as a through-focus series. The ground truth phase was obtained by solving the TIE using US TIE as solver \cite{zhang_universal_2020}. To obtain an accurate ground-truth we estimated $\partial I(x,y; z) / \partial z$ using a 20th-degree polynomial fitted on each pixel of the volume \cite{Waller:10}. The optimization was done considering the images at $d = \pm 2k \mu$m, taking $k$ from 1 to 20. We compared our polychromatic method, utilizing a defocus distance of $z = 2\mu$m, against the 2-shot ZMD and US TIE methods, which were evaluated at $z = \pm 2\mu$m. Diffusion models were trained only on synthetic data. Figures \ref{fig:qpi-clinical} and \ref{subfig:qpi-real-2} (Supplementary Material) display the reconstructions for all stacks, while Tables \ref{table:ssim-through-focus} and \ref{table:mae-through-focus} present the MS-SSIM and MAE scores for each sample and method, corresponding to the rows in the figures. The first two samples are in Figure \ref{fig:qpi-clinical}, and the next two are in Figure \ref{subfig:qpi-real-2}.

\subsection{Clinical microscopy dataset}

To evaluate the applicability of our proposed methodology to a real-world clinical context, we conduct experiments using the Urinary Tract Infection (UTI) dataset as outlined in \cite{liou_clinical_2024}. The images in this dataset were acquired for UTI patient condition evaluation in a point-of-care clinical setting. Image acquisition was performed in a through-focus manner with 200 defocused images separated by 1$\mu$m each using an Olympus BX41F microscope frame, U-5RE quintuple nosepiece, U-LS30 LED illuminator, U-AC Abbe condenser, and two types of objectives: an Olympus Plan 20x/0.40 Infinity/0.17 (used in Figure \ref{fig:qpi-rbc}) and a UPlanFL N 20x/0.5 UIS2 Infinity/0.17/OFN26.5 (used in Figures \ref{fig:qpi-clinical},  \ref{fig:uti-rbc},  \ref{fig:uti-1}, \ref{fig:rbcs} and \ref{subfig:qpi-real-2}). Polychromatic images were taken with a commercially available digital scientific 16-bit color camera (Infinity 3S-1UR, Teledyne Lumenera) connected to the frame of the microscope using a 0.5x C-mount adapter. We calculate the phase on images with both a small and a high number of objects. Figure \ref{fig:uti-1} shows the phase quantification performed by our model. Figure \ref{fig:uti-rbc} shows the reconstruction for red blood cells using three different polychromatic methods: US TIE, Mean Prediction (that is, the regression model $\mu_\theta$ of ZMD without the residual prediction), and ZMD. Crucially, our quantitative results are consistent with the values reported in the literature, that is, between 1.8 and 2.3 rad \cite{Park2019, Nguyen2017}.

\begin{figure}
    \centering
    \includegraphics[width=0.85\columnwidth]{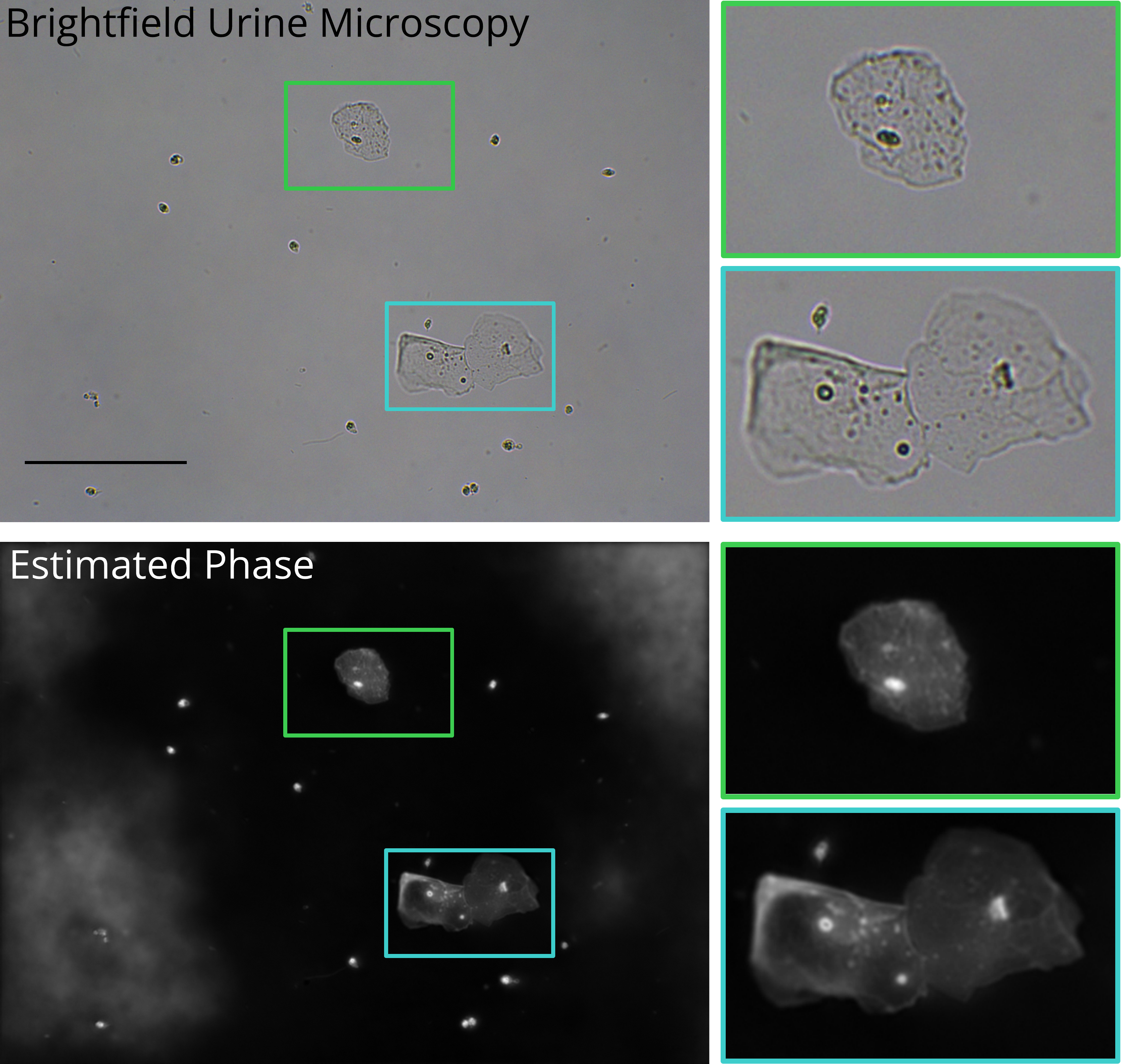}
    \caption{Phase quantification performed on clinical images. Regions of the images are enlarged to better show the objects in the image (epithelial cells). Scale bar (upper leftmost image) is 200 $\mu$m.}
    \label{fig:uti-1}
\end{figure}

\section{Discussion}


Our method was tested in 2-shot and polychromatic modalities, with ZMD showing significantly better performance than US TIE in the latter (see Tables \ref{table:qpi-syn}, \ref{table:mae-through-focus}, \ref{fig:table-ssim-rbc}, \ref{fig:table-mae-rbc}). This highlights ZMD's clinical potential, capturing fine morphometric details of both flake-shaped epithelial cells and disk-shaped red blood cells in urine, including their biconcave morphology and overlapping cytoplasm. Results were validated using two microscopes with different objectives: a UPlanFL N 20x/0.5 UIS2 Infinity/0.17/OFN26.5 objective (Figures \ref{fig:qpi-clinical}, \ref{fig:uti-rbc}, \ref{fig:uti-1}, \ref{fig:rbcs}, and \ref{subfig:qpi-real-2}) and an Olympus Plan 20x/0.40 Infinity/0.17 objective (Figure \ref{fig:qpi-rbc}).


The improvement over the traditional 2-shot modality stems from the model's use of three RGB measurements, enabling better noise handling and more reliable estimation of the through-focus gradient and enhancing image quality.


In our experiments, ZMD recovered more structure than US TIE and Mean Prediction, uniquely revealing the ring cell morphology of red blood cells. As shown in Figures \ref{fig:qpi-clinical} and \ref{fig:uti-rbc}, US TIE produced cloud artifacts. Polychromatic acquisitions inherently exhibit spatial and temporal incoherence, causing some blurring of diffraction patterns and the quantified phase (e.g., Figure \ref{fig:qpi-clinical}). Despite this trade-off, ZMD's improved noise handling and robust performance highlight its potential for medical imaging applications.

\section{Conclusions}

We present a novel method for reliably quantifying phase using chromatic aberrations rather than a purposely acquired through-focus stack. To solve this problem, we apply a denoising diffusion model designed for quantitative tasks. We show that this technique achieves high accuracy using a standard microscope frame with an RGB (color) camera, and our method's applicability is validated on clinical samples. As all necessary hardware is commercially available and commonly present in clinical microscopy setups, our approach holds significant promise for real-world clinical adoption.


Future work should focus on leveraging Physics-Informed Neural Networks (PINNs) to solve the Helmholtz equation, enabling more accurate wave modeling without the paraxial approximation. This would address current limitations and improve simulation realism. Additionally, creating a standardized QPI dataset is crucial for accurate evaluation and benchmarking of different methods.

\newpage

\section*{Acknowledgments}
This work was partially funded by the Center for Advanced Systems Understanding (CASUS), which is financed by Germany’s Federal Ministry of Education and Research (BMBF) and by the Saxon Ministry for Science, Culture, and Tourism (SMWK) with tax funds on the basis of the budget approved by the Saxon State Parliament. The authors acknowledge the support of SMWK: ScaDS.AI and Chile's National Agency for Research and Development (ANID) through their Scholarship Program (DOCTORADO BECAS CHILE/2023 - 72230222).


\bibliography{aaai25}

\appendix

\newpage
\section{Solving the Transport-of-Intensity Equation}
\label{sec:appendix-tie}

As explained in Section \ref{sec:related-work}, the Transport-of-Intensity Equation (TIE) establishes a connection between the rate of change of the image in the propagation direction and the lateral phase gradient, a connection summarized in equation (\ref{eq:tie-model}). This equation can be solved for the phase $\varphi$ by assuming certain properties of the intensity pattern $I$. The simplest case is to consider a pure phase (non-absorptive) object under constant illumination $I_0$. Then, the in-focus intensity image is given by $I(x,y;z=0)=I_0$. In this case, it is possible to solve the equation by employing a pair of Fourier transforms:
\begin{equation*}
    \varphi=\mathcal{F}^{{-}1}\bigg\{\frac{1}{4\pi^2(k_x^2+k_y^2)}\mathcal{F}\bigg\{\frac{-k}{I_0}\frac{\partial I(x,y;z)}{\partial z}\bigg\}\bigg\},
\end{equation*}
where $k_x$ and $k_y$ denote the spatial coordinates in the Fourier space. For a general absorptive object, $I(x,y;z)$ is not a constant and it is necessary to use Teague's assumption to obtain a solution:
\begin{equation}
    \label{eqn:teague-sol}
    \varphi=-k \nabla^{-2}\left\{\nabla \cdot \left\{\frac{1}{I(x,y;z)}\nabla \nabla^{-2}\left [\frac{\partial I(x,y;z)}{\partial z}\right] \right\} \right\},
\end{equation}
where the $\nabla^{-2}$ operator is defined as:
\begin{equation*}
    \nabla^{-2}g(x,y) = \mathcal{F}^{{-}1}\bigg\{\frac{1}{-4\pi^2(k_x^2+k_y^2)}\mathcal{F}\bigg\{g(x,y)\bigg\}\bigg\}.
\end{equation*}

Equation (\ref{eqn:teague-sol}) is only valid when the vector field $I\nabla \varphi = - k \nabla \nabla^{-2} \big[ \frac{\partial I(x,y;z)}{\partial z} \big]$ is conservative. However, this assumption does not always hold. This is especially true in regions of significant absorption changes, such as those caused by dust, rendering the approximation invalid \cite{zhang_universal_2020}. Furthermore, dark regions within $I(x,y;z)$ can introduce instability to the fractional term. Additionally, in both cases, solving TIE implies estimating the $\frac{\partial I(x,y;z)}{\partial z}$ term. This is usually done using finite differences \cite{wu_physics-informed_2022}, or via an interpolated polynomial whose derivative is then evaluated around $z=0$ \cite{Waller:10}. However, since the acquired images have noise, using the outlined solutions produces low-frequency noise amplification around the singularity, rendering phase recovery an ill-posed problem.

\section{More details on Conditional Variational Diffusion Models}
\label{sec:appendix-cvdm}

In this section, we fill in some of the details of the Conditional Variational Diffusion Models (CVDM) framework \cite{dellamaggiora2024conditional}. As mentioned in Section \ref{sec:methods-cvdm}, a key distinction of CVDM in comparison to other conditioning diffusion methods \cite{saharia2021image, dhariwal_diffusion_2021} is that it allows for automatically learning the schedule functions during the training process. This avoids the need to fine-tune $\beta$ and $\gamma$ by hand, a time-consuming task that leads to possibly suboptimal results. On the other hand, learning these functions during training means that some basic properties must be ensured. First, $\beta$ and $\gamma$ are made dependent on the conditioning variable $\bx$ so that the schedule may be tailored to each instance. Second, any \textit{variance-preserving} schedule needs to satisfy the following condition \cite{dellamaggiora2024conditional}:
\begin{equation}
\label{eq:schedule-differential-equation}
    \frac{\partial \gamma(t,\bx)}{\partial t}
    =
    -\beta(t,\bx)\gamma(t,\bx).
\end{equation}
Additionally, from equation (\ref{eq:forward-reparameterization}), we get the two natural conditions $\gamma(0, \bx) \approx 1$ and $\gamma(\ft, \bx) \approx 0$, which ensure that the forward process starts at $\df_\text{data}(\cdot | \bx)$ and ends in $\mathcal{N}(0, I)$, respectively. These conditions are encoded in the following loss function:
\begin{align*}
    \mathcal{L}_\beta(&\bx)= \\
    &\expec{
        t \sim \text{Unif}([0,\ft])
    }{
        \norm{
            \frac{\partial \gamma(t,\bx)}{\partial t}
            + \beta(t, \bx)\gamma(t,\bx)
        }_2^2
    } \\
    & + \norm{\gamma(0, \bx) - 1}_2^2 + \norm{\gamma(\ft, \bx) - 0}_2^2.
\end{align*}
An additional loss term $\mathcal{L}_\text{prior}(\bx, \by)$ is used to ensure $\df_\ft$ is as similar as possible to a $\mathcal{N}(0, I)$ variable. Finally, a regularization term for $\gamma$ is needed to avoid pathological schedules with abrupt changes \cite{dellamaggiora2024conditional}:
\begin{equation*}
    \mathcal{L}_{\gamma}(\bx)
    =
    \expec{
        t \sim \text{Unif}([0,\ft])
    }{
        \norm{\frac{\partial^2\gamma(t, \bx)}{\partial t^2}}_2^2
    }
\end{equation*}
Grouping all these terms, the final loss function used in CVDM is given by
\begin{align*}
    \mathcal{L}_\text{CVDM}
    =
    \mathbb{E}_{
        (\bx,\by) \sim \df_\text{data}
    }
    \Big[
        &\mathcal{L}_\beta(\bx)
        + \mathcal{L}_\text{prior}(\bx, \by) \\
        &+ \mathcal{L}_\text{noise}(\by, \bx)
        + a\mathcal{L}_{\gamma}(\bx)
    \Big],
\end{align*}
where $a$ is the weight of the regularization term $\mathcal{L}_{\gamma}$. As explained in Section \ref{sec:methods-cvdm}, a Monte Carlo estimator of $\mathcal{L}_\text{CVDM}$ is optimized by using the available $\{(\bx_i, \by_i)\}_{i=1}^N$ dataset.

\section{Proofs for Section \ref{sec:methods-zmd}}
\label{sec:appendix-proofs}

In this section, we provide all necessary proofs supporting the theoretical discussion in Section \ref{sec:methods-zmd}. Throughout, our main object of study is the stochastic process $\big\{\ty_t\big\}$ given by $\ty_t = \by_t - \mu_t$. Recall that $\mu_t$ is defined as $\mu_t = \expec{\df_t(\cdot | \bx)}{\by_t}$, that is, as the expected value of $\by_t$ when the forward process is conditioned on $\bx$. We start by proving the following auxiliary result, which gives an explicit expression for $\mu_t$ in terms of $\mu_0$.

\begin{lemma}
\label{lemma:mu-time-t}
    For $t\in [0, \ft]$ it holds that $\mu_t = \mu_0\sqrt{\gamma(t, \bx)}$.
\end{lemma}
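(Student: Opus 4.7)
The plan is to exploit two facts already established in Section \ref{sec:methods-cvdm}: the explicit Gaussian form of $\df_t(\cdot \mid \by_0, \bx)$ given in equation (\ref{eq:forward-density-t-conditioned-on-0}), and the definition $\gamma(t, \bx) = e^{-\int_0^t \beta(s,\bx) ds}$. The key observation is that $\mu_t$ is an expectation taken with the conditioning variable $\bx$ fixed, so any factor depending only on $t$ and $\bx$ behaves like a constant and can be pulled outside the expectation over $\by_0$.

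First, I would apply the tower property of conditional expectation, conditioning on $\by_0$ in addition to $\bx$:
\begin{equation*}
    \mu_t
    =
    \expec{\df_t(\cdot \mid \bx)}{\by_t}
    =
    \expec{\by_0 \sim \df_0(\cdot \mid \bx)}{
        \expec{\df_t(\cdot \mid \by_0, \bx)}{\by_t}
    }.
\end{equation*}
Next, I would plug in equation (\ref{eq:forward-density-t-conditioned-on-0}), which tells us that $\df_t(\cdot \mid \by_0, \bx) = \mathcal{N}\bigl(\sqrt{\gamma(t, \bx)}\by_0, (1-\gamma(t, \bx))I\bigr)$, so the inner expectation is simply $\sqrt{\gamma(t, \bx)}\by_0$. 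Since $\sqrt{\gamma(t, \bx)}$ does not depend on $\by_0$, I can factor it out of the outer expectation, obtaining $\mu_t = \sqrt{\gamma(t, \bx)} \expec{\by_0 \sim \df_0(\cdot \mid \bx)}{\by_0} = \sqrt{\gamma(t, \bx)} \mu_0$, which is exactly the claim.

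As an alternative (or cross-check) that I would mention in passing, one can derive the same identity directly from the forward SDE in equation (\ref{eq:forward-sde}) by taking expectations of both sides. The martingale part $\sqrt{\beta(t,\bx)}\, dW_t$ vanishes in expectation, yielding the ODE $d\mu_t/dt = -\tfrac{1}{2}\beta(t,\bx)\mu_t$ with initial condition $\mu_0$. Solving this linear ODE gives $\mu_t = \mu_0 \exp\bigl(-\tfrac{1}{2}\int_0^t \beta(s, \bx) ds\bigr) = \mu_0 \sqrt{\gamma(t, \bx)}$, matching the first derivation.

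Neither route is technically difficult; the only subtlety—and thus the mildest obstacle—is being explicit that expectations are taken with $\bx$ held fixed, so that the $\bx$-dependent schedule $\gamma(t, \bx)$ behaves as a deterministic constant throughout. I would favor the tower-property argument because it uses only an identity already proved in the paper, making the lemma essentially a one-line consequence, which in turn feeds cleanly into the proof of Proposition \ref{prop:zero-mean-sde} and Theorem \ref{thm:KL-upper-bound-inequality} that follow.
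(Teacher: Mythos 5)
Your main argument is correct and is essentially identical to the paper's own proof, which writes out the same tower-property computation explicitly as a double integral, marginalizes over $\by_0$, and invokes equation (\ref{eq:forward-density-t-conditioned-on-0}) for the conditional mean $\sqrt{\gamma(t,\bx)}\,y_0$. The alternative SDE/ODE derivation you mention is a nice cross-check but not part of the paper's argument; nothing is missing from your primary route.
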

\begin{proof}
    Starting from the definition of $\mu_t$, we get
    \begin{align*}
        \mu_t
        &=
        \int y \> \df_t(y | \bx) \> dy \\
        &=
        \int y \int \df_t(y | \by_0 = y_0, \bx) \> \df_0(y_0 | \bx) \> dy_0 \> dy \\
        &=
        \int \df_0(y_0 | \bx) \int y \> \df_t(y | \by_0 = y_0, \bx) \> dy \> dy_0
    \intertext{By equation (\ref{eq:forward-density-t-conditioned-on-0}) we know $\df_t(y | \by_0 = y_0, \bx)$ is Gaussian with expected value $\sqrt{\gamma(t, \bx)} \> y_0$:}
        &=
        \int \df_0(y_0 | \bx) \left(\sqrt{\gamma(t, \bx)} \> y_0 \right) dy_0 \\
        &=
        \sqrt{\gamma(t, \bx)} \int y_0 \> \df_0(y_0 | \bx) \> dy_0 \\
        &=
        \sqrt{\gamma(t, \bx)} \> \mu_0.
    \end{align*}
\end{proof}

We now provide the proof for Proposition \ref{prop:zero-mean-sde}. This result shows that the stochastic process $\big\{\ty_t\big\}$ evolves according to the same stochastic differential equation (SDE) as $\{\by_t\}$.

\begin{proof}[Proof of Proposition \ref{prop:zero-mean-sde}]
    Define the function $F(y, t) = y - \mu_t$. By Ito's Rule \cite{karatzas1991brownian} we know that
    \begin{align*}
        dF(\by_t, t)
        =
        \Bigg[
            &-\frac{1}{2}\beta(t, \bx) \by_t \frac{\partial F}{\partial y}(\by_t, t)
            + \frac{\partial F}{\partial t}(\by_t, t) \\
            &+ \frac{1}{2}\beta(t, \bx) \frac{\partial^2 F}{\partial y^2}(\by_t, t)
        \Bigg] dt \\
        &+ \sqrt{\beta(t, \bx)} \frac{\partial F}{\partial y}(\by_t, t) dW_t.
    \end{align*}
    It is easy to see that $\frac{\partial F}{\partial y} = 1$, $\frac{\partial^2 F}{\partial y^2} = 0$ and $\frac{\partial F}{\partial t} = -\frac{\partial \mu_t}{\partial t}$. Now, using Lemma \ref{lemma:mu-time-t} we get
    \begin{align*}
        -\frac{\partial \mu_t}{\partial t}
        &= -\frac{\partial \left(\mu_0\sqrt{\gamma(t, \bx)}\right)}{\partial t} \\
        &= -\mu_0 \frac{\partial \sqrt{\gamma(t, \bx)}}{\partial t} \\
        &= -\mu_0 \frac{1}{2\sqrt{\gamma(t, \bx)}}
            \frac{\partial \gamma(t, \bx)}{\partial t}
    \intertext{
        Using equation (\ref{eq:schedule-differential-equation}) to replace
        $\frac{\partial \gamma(t, \bx)}{\partial t}$:
    }
        &= \mu_0 \frac{1}{2\sqrt{\gamma(t, \bx)}} \gamma(t, \bx) \beta(t, \bx) \\
        &= \frac{1}{2} \beta(t, \bx) \mu_0 \sqrt{\gamma(t, \bx)} \\
        &= \frac{1}{2} \beta(t, \bx) \mu_t.
    \end{align*}
Replacing all partial derivatives in the expression for $dF(\by_t, t)$ we get:
    \begin{align*}
        dF(&\by_t, t) \\
        &=
        \left[
            -\frac{1}{2} \beta(t, \bx) \by_t
            +\frac{1}{2} \beta(t, \bx) \mu_t
        \right] dt
        + \sqrt{\beta(t, \bx)} dW_t \\
        &=
        -\frac{1}{2} \beta(t, \bx) (\by_t - \mu_t) dt
        + \sqrt{\beta(t, \bx)} dW_t \\
        &=
        -\frac{1}{2} \beta(t, \bx) F(\by_t, t) dt
        + \sqrt{\beta(t, \bx)} dW_t.
    \end{align*}
    Substituting $\ty_t$ for $F(\by_t, t)$ finishes the proof.
\end{proof}

We now work towards proving our main theoretical result, given by Theorem \ref{thm:KL-upper-bound-inequality}. First, we prove a straightforward relationship between the density functions $\{\df_t\}_{t\in [0, \ft]}$ \big(characterizing the standard forward process $\{ \by_t \}_{t\in [0, \ft]}$\big) and the density functions $\{\tf_t\}_{t\in [0, \ft]}$ \big(characterizing the centered forward process $\big \{ \ty_t \big \}_{t\in [0, \ft]}$\big).

\begin{lemma}
\label{lemma:density-tilde}
    For all $t\in [0, \ft]$ it holds that
    \begin{equation*}
        \tf_t(y | \bx) = \df_t(y + \mu_t | \bx).    
    \end{equation*}
\end{lemma}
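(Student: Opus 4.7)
The plan is to view this as a standard change-of-variables result for probability densities. Conditional on $\bx$, the quantity $\mu_t = \expec{\df_t(\cdot\mid\bx)}{\by_t}$ is a deterministic constant (indeed, by Lemma \ref{lemma:mu-time-t}, $\mu_t = \mu_0\sqrt{\gamma(t,\bx)}$, which depends only on $t$ and $\bx$). Therefore the random variable $\ty_t = \by_t - \mu_t$ is obtained from $\by_t$ by a deterministic translation, and its density is simply the density of $\by_t$ shifted by $\mu_t$.

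The key steps I would carry out are as follows. First, I would fix $\bx$ and treat $\mu_t$ as a constant. Then, for any measurable set $B$, I would write
\begin{equation*}
    \Pr[\ty_t \in B \mid \bx]
    = \Pr[\by_t - \mu_t \in B \mid \bx]
    = \Pr[\by_t \in B + \mu_t \mid \bx]
    = \int_{B+\mu_t} \df_t(y' \mid \bx)\, dy'.
\end{equation*}
Applying the change of variables $y' = y + \mu_t$ (whose Jacobian determinant is $1$) yields
\begin{equation*}
    \Pr[\ty_t \in B \mid \bx]
    = \int_B \df_t(y + \mu_t \mid \bx)\, dy.
\end{equation*}
Since this holds for every measurable $B$, the integrand must coincide almost everywhere with the density $\tf_t(\cdot\mid\bx)$ of $\ty_t$ conditional on $\bx$, giving $\tf_t(y\mid\bx) = \df_t(y+\mu_t\mid\bx)$.

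There is essentially no technical obstacle here; the only point worth double-checking is that $\mu_t$ is indeed a deterministic function of $t$ and $\bx$ (and not itself random), which is immediate from its definition as an expectation under $\df_t(\cdot\mid\bx)$ with $\bx$ held fixed. Once this is noted, the identity follows from the translation invariance of Lebesgue measure. This lemma is the technical bridge that will let the subsequent argument (bounding $D_\text{KL}(\df_0\|\tr_\ft^{\mu_0})$) translate statements about the centered process $\{\ty_t\}$ back into statements about the original process $\{\by_t\}$, which is what Theorem \ref{thm:KL-upper-bound-inequality} ultimately requires.
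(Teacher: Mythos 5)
Your proof is correct and takes essentially the same approach as the paper, which also treats $\ty_t = \by_t - \mu_t$ as a deterministic translation and applies the change-of-variables formula with unit Jacobian; you merely derive that formula from first principles via measurable sets rather than citing it directly. The observation that $\mu_t$ is deterministic given $t$ and $\bx$ is the right point to check, and matches the paper's reasoning.
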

\begin{proof}
    For a given $\bx$, recall that $\by_t$ is the random variable resulting from sampling $\by_0 \sim \df_\text{data}(\cdot | \bx)$ and then evolving the stochastic process according to the SDE given by (\ref{eq:forward-sde}). Its density is denoted as $\df_t(\cdot | \bx)$. Since $\ty_t$ is defined as $\by_t - \mu_t$ this lemma is a straightforward application of the change-of-variables formula in probability. That is, for any bijective and differentiable transformation $\ty_t = G(\by_t)$ the density $\tf_t(\cdot | \bx)$ of $\ty_t$ is given by:
    \begin{equation*}
        \tf_t(y | \bx)
        =
        \df_t\big( G^{-1}(y) | \bx \big)
        \abs{
            \text{det}\left(
                \mathbf{J}G^{-1}(y)
            \right)
        },
    \end{equation*}
    where $G^{-1}$ is the inverse function and $\mathbf{J}$ denotes the Jacobian operator. In our case, $G^{-1}(y)=y+\mu_t$ and the determinant term is $1$ because $\mu_t$ is fixed (not dependent on $y$). This gives the result.
\end{proof}

We need to prove three more results in a similar vein to Lemma \ref{lemma:density-tilde}. The first one involves the density function when conditioning on $\ty_0$.

\begin{lemma}
\label{lemma:density-tilde-conditioned}
    For all $t\in [0, \ft]$ it holds that
    \begin{equation*}
        \tf_t\big(y | \ty_0=y_0, \bx\big)
        = \df_t(y + \mu_t | \by_0=y_0 + \mu_0, \bx).
    \end{equation*}
\end{lemma}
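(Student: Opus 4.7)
\textbf{Proof plan for Lemma \ref{lemma:density-tilde-conditioned}.} The plan is to reduce this result to a direct application of the change-of-variables formula, in close analogy with the proof of Lemma \ref{lemma:density-tilde}, after first re-expressing the conditioning event in terms of $\by_0$.

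First I would observe that the conditioning events match up trivially. Since $\ty_0 = \by_0 - \mu_0$ and $\mu_0$ is a deterministic function of $\bx$ (recall $\mu_0 = \expec{\df_0(\cdot | \bx)}{\by_0}$), the events $\{\ty_0 = y_0\}$ and $\{\by_0 = y_0 + \mu_0\}$ coincide once $\bx$ is fixed. Hence the conditional law of $\ty_t$ given $\ty_0 = y_0$ and $\bx$ is exactly the conditional law of $\by_t - \mu_t$ given $\by_0 = y_0 + \mu_0$ and $\bx$. This is the step that lets me push all the work onto the standard process $\{\by_t\}$.

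Next I would apply the change-of-variables formula to the bijective affine transformation $G(y) = y - \mu_t$. Since $\mu_t$ depends only on $t$ and $\bx$ (not on $y$), we have $G^{-1}(y) = y + \mu_t$ and the Jacobian determinant is $1$. The formula, applied to the density of $\by_t$ conditioned on $\by_0 = y_0 + \mu_0$ and $\bx$, yields
\begin{equation*}
    \tf_t(y \mid \ty_0 = y_0, \bx)
    = \df_t\bigl(G^{-1}(y) \bigm| \by_0 = y_0 + \mu_0, \bx\bigr)
    = \df_t(y + \mu_t \mid \by_0 = y_0 + \mu_0, \bx),
\end{equation*}
which is the claim. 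I do not expect any serious obstacle here: the only subtlety is the bookkeeping about which $\mu$ appears in the condition (it must be $\mu_0$) versus which appears in the shift of the argument (it must be $\mu_t$), and this is resolved cleanly once one notes that conditioning is on the initial time while the density is evaluated at time $t$. An alternative but essentially equivalent route would use Lemma \ref{lemma:density-tilde} together with Bayes' rule, writing $\tf_t(y \mid \ty_0 = y_0, \bx) = \tf_{0,t}(y_0, y \mid \bx)/\tf_0(y_0 \mid \bx)$ and translating both numerator and denominator via the $\mu_t, \mu_0$ shifts; this gives the same identity and could be presented if one prefers to avoid invoking the change-of-variables formula conditionally.
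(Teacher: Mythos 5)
Your proof is correct, but it follows a genuinely different route from the paper's. The paper first invokes Proposition \ref{prop:zero-mean-sde} to argue that the centered process obeys the same SDE, so that $\tf_t(\cdot \mid \ty_0 = y_0, \bx)$ is explicitly Gaussian, $\mathcal{N}\big(\sqrt{\gamma}\, y_0, (1-\gamma) I\big)$; it then writes out the Gaussian density, adds and subtracts $\mu_t$ in the exponent, and uses Lemma \ref{lemma:mu-time-t} (i.e. $\mu_t = \sqrt{\gamma}\,\mu_0$) to regroup the mean as $\sqrt{\gamma}(y_0 + \mu_0)$, identifying the result with $\df_t(y+\mu_t \mid \by_0 = y_0+\mu_0, \bx)$. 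You instead work directly from the definition $\ty_t = \by_t - \mu_t$: since $\mu_0$ is deterministic given $\bx$, the conditioning events $\{\ty_0 = y_0\}$ and $\{\by_0 = y_0 + \mu_0\}$ coincide, and the unit-Jacobian shift $G(y) = y - \mu_t$ then transfers the conditional density by change of variables. Your argument is more elementary and more general --- it needs neither the Gaussian form, nor Proposition \ref{prop:zero-mean-sde}, nor Lemma \ref{lemma:mu-time-t}, and would hold for any process shifted by a deterministic time-dependent function. What the paper's computation buys in exchange is the explicit Gaussian expression and the mean identity $\sqrt{\gamma}\,y_0 + \mu_t = \sqrt{\gamma}(y_0+\mu_0)$, which is essentially the same manipulation reused in the proof of Lemma \ref{lemma:score-tilde-conditioned} for the conditional scores; your route would require a separate (though equally easy) argument there. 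The only technicality you share with the paper is conditioning on the measure-zero event $\{\by_0 = y_0+\mu_0\}$, which both treatments handle informally; this is not a gap.
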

\begin{proof}
    By Proposition \ref{prop:zero-mean-sde} we know that $\big \{ \ty_t \big \}$ evolves according to the SDE given by (\ref{eq:forward-sde}). Hence, equation (\ref{eq:forward-density-t-conditioned-on-0}) implies that $\tf_t\big(\cdot | \ty_0=y_0, \bx\big)$ has a $\mathcal{N}\big(\sqrt{\gamma(t, \bx)} y_0, (1-\gamma(t, \bx)) I\big)$ distribution. This leads to the following calculation, where we use $\gamma$ in place of $\gamma(t, \bx)$ to simplify notation:
    \begin{align*}
        \tf_t\big(&y | \ty_0=y_0, \bx\big) \\
        &=
        \mathcal{N}\big(y; \sqrt{\gamma} y_0, (1-\gamma) I\big) \\
        &=
        \frac{1}{(2\pi)^{\frac{d}{2}}(1-\gamma)^{\frac{d}{2}}}
        e^{
            -\frac{1}{2(1 - \gamma)}
            (y - \sqrt{\gamma}y_0)^T
            (y - \sqrt{\gamma}y_0)
        } \\
        &=
        \frac{1}{(2\pi)^{\frac{d}{2}}(1-\gamma)^{\frac{d}{2}}}
        e^{
            -\frac{1}{2(1 - \gamma)}
            \norm{y - \sqrt{\gamma}y_0}_2^2
        } \\
        &=
        \frac{1}{(2\pi)^{\frac{d}{2}}(1-\gamma)^{\frac{d}{2}}}
        e^{
            -\frac{1}{2(1 - \gamma)}
            \norm{y + \mu_t - \mu_t - \sqrt{\gamma}y_0}_2^2
        }
    \intertext{By Lemma \ref{lemma:mu-time-t} we know that $\mu_t = \sqrt{\gamma(t, \bx)}\mu_0$:}
        &=
        \frac{1}{(2\pi)^{\frac{d}{2}}(1-\gamma)^{\frac{d}{2}}}
        e^{
            -\frac{1}{2(1 - \gamma)}
            \norm{y + \mu_t - \sqrt{\gamma}\mu_0 - \sqrt{\gamma}y_0}_2^2
        } \\
        &=
        \frac{1}{(2\pi)^{\frac{d}{2}}(1-\gamma)^{\frac{d}{2}}}
        e^{
            -\frac{1}{2(1 - \gamma)}
            \norm{(y + \mu_t) - \sqrt{\gamma}(y_0 + \mu_0)}_2^2
        } \\
        &=
        \mathcal{N}\big(y+\mu_t; \sqrt{\gamma} (y_0+\mu_0), (1-\gamma) I\big) \\
        &=
        \df_t\big(y + \mu_t | \by_0=y_0 + \mu_0, \bx\big).
    \end{align*}
\end{proof}

Our last results of this form involve the \textit{score} of the density functions $\{\df_t\}$, which play a significant role in diffusion models \cite{song_score-based_2021}. For the forward process $\{\by_t\}$, the score function at time $t$ is given by $\nabla \log \df_t(y | \bx)$, where $\nabla$ represents the gradient with respect to the $y$ variable. This notation means that the gradient of $\log \df_t(\cdot | \bx)$ is evaluated at $y$. For the centered forward process $\big\{ \ty_t \big\}$ the score functions are defined analogously. The following result relates the scores of the two processes.


\begin{lemma}
\label{lemma:score-tilde}
    For all $t\in [0, \ft]$ it holds that $\nabla \log \tf_t(y | \bx) = \nabla \log \df_t(y + \mu_t | \bx)$.
\end{lemma}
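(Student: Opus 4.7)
The plan is to reduce this statement directly to Lemma \ref{lemma:density-tilde} via the chain rule. That lemma already gives the pointwise identity $\tf_t(y \mid \bx) = \df_t(y + \mu_t \mid \bx)$ for every $y$, so the only work remaining is to pass this equality through $\log$ and then through the gradient $\nabla$ with respect to $y$.

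Concretely, I would first take logarithms on both sides of Lemma \ref{lemma:density-tilde} to obtain $\log \tf_t(y \mid \bx) = \log \df_t(y + \mu_t \mid \bx)$, which is valid because $\tf_t(\cdot \mid \bx)$ is a density (and a linear shift of a density, so strictly positive wherever $\df_t$ is). Then I would differentiate both sides with respect to $y$. On the left we get $\nabla \log \tf_t(y \mid \bx)$ by definition. On the right we apply the chain rule to the composition $y \mapsto y + \mu_t \mapsto \log \df_t(\cdot \mid \bx)$. The key observation—and essentially the only nontrivial point—is that $\mu_t$ depends on $t$ and $\bx$ but \emph{not} on $y$, so the Jacobian of the inner map $y \mapsto y + \mu_t$ is the identity. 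Therefore the chain rule collapses, yielding $(\nabla \log \df_t)(y + \mu_t \mid \bx)$, which is exactly $\nabla \log \df_t(y + \mu_t \mid \bx)$ in the paper's notation.

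There is no real obstacle here; the proof is essentially a one-line consequence of Lemma \ref{lemma:density-tilde} plus the fact that the shift $\mu_t$ is independent of $y$. The only thing worth being careful about is notational: emphasizing that $\nabla$ denotes the gradient with respect to the spatial argument evaluated at the shifted point $y + \mu_t$, so that the reader does not mistakenly try to differentiate $\mu_t$ itself. With that clarification in place, the statement follows immediately.
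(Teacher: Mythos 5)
Your proposal is correct and matches the paper's argument in essence: both reduce the claim to Lemma \ref{lemma:density-tilde} and then use the fact that the shift $y \mapsto y + \mu_t$ has identity Jacobian, the only cosmetic difference being that the paper expands $\nabla \log \tf_t = \nabla \tf_t / \tf_t$ before applying the chain rule, whereas you take logarithms first and then differentiate. The two routes are equivalent, and your remark that $\mu_t$ does not depend on $y$ is exactly the key point the paper also emphasizes.
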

\begin{proof}
    By Lemma \ref{lemma:density-tilde} we know $\tf_t(y | \bx) = \df_t(y + \mu_t | \bx)$ for any $y$. This means that:
    \begin{align*}
        \nabla \log \tf_t(y | \bx)
        &=
        \frac{1}{\tf_t(y | \bx)}
        \nabla \tf_t(y | \bx) \\
        &=
        \frac{1}{\df_t(y + \mu_t | \bx)}
        \nabla \tf_t(y | \bx) \\
        &=
        \frac{1}{\df_t(y + \mu_t | \bx)}
        \big(\nabla \df_t(\cdot | \bx) J\big)(y+\mu_t).
    \intertext{Here, $J$ is the Jacobian matrix of the transformation $y\to y+\mu_t$, and hence equal to $I$:}
        &=
        \frac{1}{\df_t(y + \mu_t | \bx)}
        \nabla \df_t(y+\mu_t | \bx) \\
        &=
        \nabla \log \df_t(y + \mu_t | \bx).
    \end{align*}
\end{proof}

Now, consider the forward process $\big\{\ty_t\big\}$ conditioned on the equality $\ty_0 = y_0$. This gives rise to a new set of densities $\tf_t(\cdot | \ty_0=y_0, \bx)$, and thus new score functions $\nabla \log \tf_t(\cdot | \ty_0=y_0, \bx)$. The following result relates the scores of the two forward processes when they are conditioned on the starting variable of the respective process.

\begin{lemma}
\label{lemma:score-tilde-conditioned}
    For all $t\in [0, \ft]$ it holds that
    \begin{equation*}
        \nabla \log \tf_t(y | \ty_0=y_0, \bx)
        = \nabla \log \df_t(y + \mu_t | \by_0=y_0+\mu_0, \bx).
    \end{equation*}
\end{lemma}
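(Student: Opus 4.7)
The plan is to mirror the proof of Lemma \ref{lemma:score-tilde}, but starting from the conditional density identity of Lemma \ref{lemma:density-tilde-conditioned} instead of the unconditional one. Concretely, Lemma \ref{lemma:density-tilde-conditioned} already gives us the pointwise equality
\begin{equation*}
    \tf_t(y | \ty_0=y_0, \bx)
    = \df_t(y + \mu_t | \by_0=y_0 + \mu_0, \bx),
\end{equation*}
so the task reduces to showing that taking $\nabla \log$ of both sides preserves the equality in exactly the form stated.

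First I would take the logarithm on both sides, which is legitimate since both densities are strictly positive (they are Gaussian densities, by equation (\ref{eq:forward-density-t-conditioned-on-0}) applied to each process and the fact, established in Proposition \ref{prop:zero-mean-sde}, that $\big\{\ty_t\big\}$ obeys the same SDE as $\{\by_t\}$). Then I would apply $\nabla$ (the gradient in $y$) to both sides. On the left I simply get $\nabla \log \tf_t(y | \ty_0=y_0, \bx)$. On the right I need to apply the chain rule to the composition of $\log \df_t(\cdot | \by_0=y_0+\mu_0, \bx)$ with the affine map $y \mapsto y+\mu_t$.

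The key observation, exactly as in the proof of Lemma \ref{lemma:score-tilde}, is that $\mu_t$ does not depend on $y$ (only on $t$ and on the fixed conditioning variable $\bx$, via $\mu_t = \mu_0 \sqrt{\gamma(t, \bx)}$ from Lemma \ref{lemma:mu-time-t}). Hence the Jacobian of $y \mapsto y + \mu_t$ is the identity, and the chain rule collapses to a simple shift of evaluation point, yielding $\nabla \log \df_t(y+\mu_t | \by_0=y_0+\mu_0, \bx)$. Combining both sides gives the claim.

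I do not anticipate any real obstacle: this lemma is essentially the conditional analogue of Lemma \ref{lemma:score-tilde}, and the only thing to be careful about is to note explicitly that $\mu_t$ is $y$-independent so that the Jacobian in the chain rule is $I$. The shift from $y_0$ to $y_0+\mu_0$ in the conditioning is built into the density identity provided by Lemma \ref{lemma:density-tilde-conditioned} and does not interact with the gradient in $y$ at all.
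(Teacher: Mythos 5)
Your proof is correct, but it takes a genuinely different (and arguably more economical) route than the paper's. The paper does \emph{not} derive Lemma \ref{lemma:score-tilde-conditioned} from Lemma \ref{lemma:density-tilde-conditioned}; instead, it starts over from equation (\ref{eq:forward-density-t-conditioned-on-0}), writes both conditional densities explicitly as Gaussians $\mathcal{N}(\sqrt{\gamma}y_0,(1-\gamma)I)$ and $\mathcal{N}(\sqrt{\gamma}(y_0+\mu_0),(1-\gamma)I)$, computes the score in closed form as $-\frac{1}{1-\gamma}(y-\sqrt{\gamma}y_0)$, and then repeats the add-and-subtract-$\mu_t$ manipulation (via Lemma \ref{lemma:mu-time-t}, $\mu_t=\sqrt{\gamma}\mu_0$) that already appeared in the proof of Lemma \ref{lemma:density-tilde-conditioned}. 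Your argument instead treats the lemma as an immediate corollary of the pointwise density identity of Lemma \ref{lemma:density-tilde-conditioned}: take $\log$ (legitimate since both sides are strictly positive Gaussian densities) and apply $\nabla_y$ with the chain rule, noting that the Jacobian of $y\mapsto y+\mu_t$ is the identity because $\mu_t$ depends only on $t$ and $\bx$. This is exactly the technique the paper itself uses for the unconditional analogue, Lemma \ref{lemma:score-tilde}, so it is clearly within the spirit of the appendix; it buys you a shorter proof with no duplicated algebra, while the paper's explicit computation buys a closed-form expression for the conditional score as a byproduct. No gap in your reasoning.
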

\begin{proof}
    We know that both forward processes evolve according to the SDE given by (\ref{eq:forward-sde}). Hence, equation (\ref{eq:forward-density-t-conditioned-on-0}) implies the following equalities, where we use $\gamma$ in place of $\gamma(t, \bx)$ to simplify notation:
    \begin{align*}
        \df_t\big(\cdot | \by_0=y_0 + \mu_0, \bx\big)
        &=
        \mathcal{N}\big(
            \sqrt{\gamma} (y_0 + \mu_0), (1-\gamma) I
        \big), \\
        \tf_t\big(\cdot | \ty_0=y_0, \bx\big)
        &=
        \mathcal{N}\big(
            \sqrt{\gamma} y_0, (1-\gamma) I
        \big).
    \end{align*}
    Then, by replacing the density function of the multivariate Gaussian, we can apply the logarithm and then compute the gradient to get
    \begin{align*}
        \nabla \log \tf_t(&y | \ty_0=y_0, \bx) \\
        &=
        -\frac{1}{(1 - \gamma)} (y - \sqrt{\gamma}y_0) \\
        &=
        -\frac{1}{(1 - \gamma)} (y + \mu_t - \mu_t - \sqrt{\gamma}y_0)
    \intertext{Using Lemma \ref{lemma:mu-time-t} to replace $\mu_t=\sqrt{\gamma}\mu_0$:}
        &=
        -\frac{1}{(1 - \gamma)} (y + \mu_t - \sqrt{\gamma}\mu_0 - \sqrt{\gamma}y_0) \\
        &=
        -\frac{1}{(1 - \gamma)} \big((y + \mu_t) - \sqrt{\gamma}(y_0 + \mu_0)\big) \\
        &=
        \nabla \log \df_t\big(y + \mu_t | \by_0=y_0 + \mu_0, \bx\big).
    \end{align*}
\end{proof}

We can now proceed to the proof of Theorem \ref{thm:KL-upper-bound-inequality}. Recall from Sections \ref{sec:methods-cvdm} and \ref{sec:methods-zmd} that standard CVDM allows us to sample from a distribution $\bz_t \sim \dr_\ft(\cdot | \bx)$. On the other hand, ZMD allows us to sample $\tz_t \sim \tr_\ft(\cdot | \bx)$ and then add $\mu_0$ to $\tz_t$, effectively simulating a different distribution $\tr_\ft^{\mu_0}$. Both $\dr_\ft(\cdot | \bx)$ and $\tr_\ft^{\mu_0}(\cdot | \bx)$ should approximate $\df_0(\cdot | \bx)=\df_\text{data}(\cdot | \bx)$. To study the quality of the approximation, we can analyze the terms $D_\text{KL}(\df_0 || \dr_\ft)$ with $D_\text{KL}(\df_0 || \tr_\ft^{\mu_0})$, where we have omitted the dependence on $\bx$ to simplify notation.

Notice that $\tf_0(\cdot | \bx)$ and $\df_0(\cdot | \bx)$ are related by the following variable transformation: if $\ty_0 \sim \tf_0(\cdot | \bx)$, then $\big(\ty_0 + \mu_0\big) \sim \df_0(\cdot | \bx)$. This is the exact transformation that relates $\tr_\ft$ and $\tr_\ft^{\mu_0}$. Since the Kullback–Leibler divergence $D_\text{KL}$ is invariant under diffeomorphisms \cite{nielsen2019kullback} (such as the $y\to y + \mu_0$ mapping), this means that
\begin{equation}
\label{eq:kl-divergence-invariant}
    D_\text{KL}(\df_0 || \tr_\ft^{\mu_0})
    =
    D_\text{KL}(\tf_0 || \tr_\ft).
\end{equation}

Equation (\ref{eq:kl-divergence-invariant}) allows us to study the quality of approximation for $\df_0(\cdot | \bx)$ by comparing $D_\text{KL}(\df_0 || \dr_\ft)$ with $D_\text{KL}(\tf_0 || \tr_\ft)$. This is simpler than comparing $D_\text{KL}(\df_0 || \dr_\ft)$ with $D_\text{KL}(\df_0 || \tr_\ft^{\mu_0})$, because it involves two similar terms: $D_\text{KL}(\df_0 || \dr_\ft)$ represents how accurately the forward process $\{\by_t\}$ can be reversed, and $D_\text{KL}(\tf_0 || \tr_\ft)$ has the exact same interpretation for the forward process $\big\{\ty_t\big\}$. In a sense, all that matters is how well the respective forward process can be reversed: the operation of subtracting $\mu_0$ and adding it back has no effect in the capacity to approximate $\df_0(\cdot | \bx)$.

We use the following strategy to indirectly compare $D_\text{KL}(\df_0 || \dr_\ft)$ and $D_\text{KL}(\tf_0 || \tr_\ft)$. Although these terms are challenging to analyze, there are known upper bounds for them. In particular, we use the following result \cite{franzese2023much}:
\begin{equation}
\label{eq:KL-upper-bound-preliminary}
    D_\text{KL}(\df_0 || \dr_\ft)
    \leq
    D_\text{KL}(\df_\ft || \mathcal{N}(0, I))
    +
    \frac{1}{2} (I_\nu + I)
\end{equation}
where
\begin{align*}
    I_\nu
    =
    \int_0^\ft \gamma(&t, \bx) \int \bigg( \int \\
    & \big\lVert
        s_\nu(y, t, \bx) - \nabla \log \df_t(y | \by_0 = y_0, \bx)
        \big\rVert_2^2 \\
    & \cdot \df_t(y | \by_0 = y_0, \bx) dy \bigg)
        \df_0(y_0 | \bx) \>dy_0 \>dt, \\
    I
    =
    \int_0^\ft \gamma(&t, \bx) \int \bigg( \int \\
    & \big\lVert
        \nabla \log \df_t(y | \bx) - \nabla \log \df_t(y | \by_0 = y_0, \bx)
        \big\rVert_2^2 \\
    & \cdot \df_t(y | \by_0 = y_0, \bx) dy \bigg)
        \df_0(y_0 | \bx) \>dy_0 \> dt.
\end{align*}
\normalsize
Here, $s_\nu$ represents the learned function that approximates the score, that is, a function that after training should satisfy $s_\nu(y, t, \bx) \approx \nabla \log \df_t(y | \bx)$. There is an explicit relationship between $s_\nu$ and the noise-prediction model $\eps_\nu$ learned by CVDM. Now, we can bound $D_\text{KL}(\df_\ft || \mathcal{N}(0, I))$ by using the derivation in \citet{benton2024a} to change the bound in (\ref{eq:KL-upper-bound-preliminary}) into:
\begin{align}
    D_\text{KL}(\df_0 || \dr_\ft)
    \leq
    &\frac{1}{2} \gamma(\ft, \bx)
    \expec{\df_0(\cdot | \bx)}{\big\lVert\by_0\big\rVert_2^2}
    \nonumber \\
    &+\frac{1}{2} d(1 - \gamma(\ft, \bx)) \label{eq:KL-upper-bound} \\
    &+\frac{1}{2} \big(I_\nu + I\big) \nonumber,
\end{align}
where $d$ denotes the dimension of the $\{\by_t\}$ and $\big\{\ty_t\big\}$ variables. The centered process follows the same dynamics as the standard forward process (that is, equation (\ref{eq:forward-sde})), so analogously, we have
\begin{align}
    D_\text{KL}(\tf_0 || \tr_\ft)
    &\leq
    \frac{1}{2}\gamma(\ft, \bx)
    \expec{\tf_0(\cdot | \bx)}{\big\lVert\ty_0\big\rVert_2^2}
    \nonumber \\
    &+\frac{1}{2} d(1 - \gamma(\ft, \bx)) \label{eq:KL-upper-bound-centered} \\
    &+\frac{1}{2} \big(\tI_\nu + \tI\big) \nonumber.
\end{align}
Here, $\tI_\nu$ and $\tI$ are defined in exactly the same way as $I_\nu$ and $I$, replacing $\df_t$ by $\tf_t$ and using the approximate score function $\ts_\nu(y, t, \bx)$ learned for the centered process $\big\{\ty_t\big\}$. Notice that the right-hand sides of equations (\ref{eq:KL-upper-bound}) and (\ref{eq:KL-upper-bound-centered}) are the upper bounds referred to as $M(\dr_\ft)$ and $M(\tr_\ft^{\mu_0})$ in Section \ref{sec:methods-zmd}, respectively. We now show that these two terms are related such that $M(\tr_\ft^{\mu_0}) \leq M(\dr_\ft)$.

First, suppose an approximate score function $s_\nu$ has been learned for the process $\{\by_t\}$. Then, we can define an approximate score function $\ts_\nu$ for the process $\big\{\ty_t\big\}$ as $\ts_\nu(y, t, \bx) = s_\nu(y + \mu_t, t, \bx)$. Now, the following equality can be derived:
\begin{alignat*}{2}
    \tI_\nu
    &=
    & &\int_0^\ft \gamma(t, \bx) \int \int \bigg( \\
    & 
    & &
    \Big\lVert
        \ts_\nu(y, t, \bx) - \nabla
         \log \tf_t(y | \ty_0 = y_0, \bx)
    \Big\rVert_2^2 \cdot \\
    & 
    & &
    \tf_t(y | \ty_0 = y_0, \bx) dy \bigg)
    \tf_0(y_0 | \bx) dy_0 \> dt
\intertext{Applying Lemmas \ref{lemma:density-tilde}, \ref{lemma:density-tilde-conditioned} and \ref{lemma:score-tilde-conditioned}:}
    &=
    & &\int_0^\ft \gamma(t, \bx) \int \int \bigg( \\
    & 
    & &
    \Big\lVert
        \ts_\nu(y, t, \bx) - \nabla
        \log \df_t(y + \mu_t| \by_0 = y_0 + \mu_0, \bx)
    \Big\rVert_2^2 \cdot \\
    & 
    & &
    \df_t(y + \mu_t | \by_0 = y_0 + \mu_0, \bx) dy \bigg)
    \df_0(y_0 + \mu_0 | \bx) dy_0 \> dt
\intertext{Replacing $\ts_\nu(y, t, \bx) = s_\nu(y + \mu_t, t, \bx)$ and using the substitutions $u = y+\mu_t$ and $u_0 = y_0+\mu_0$ for the inner integrals:}
    &=
    & &\int_0^\ft \gamma(t, \bx) \int \int \bigg( \\
    & 
    & &
    \Big\lVert
        s_\nu(u, t, \bx) - \nabla
        \log \df_t(u | \by_0 = u_0, \bx)
    \Big\rVert_2^2 \cdot \\
    & 
    & &
    \df_t(u | \by_0 = u_0, \bx) dy \bigg)
    \df_0(u_0 | \bx) du_0 \> dt \\
    & = & & I_\nu. 
\end{alignat*}

Analogously we can prove that $\tI = I$, where the only difference in the proof is that we need to use Lemma \ref{lemma:score-tilde} to replace $\nabla \log \tf_t(y | \bx) = \nabla \log \df_t(y + \mu_t | \bx)$ before performing the substitutions inside the inner integrals. Notice what this implies. On the one hand, $\tI = I$ always holds. On the other hand, for any learned score $s_\nu$, we can construct $\ts_\nu$ in such a way that $\tI_\nu = I_\nu$. This means that in the upper bounds $M(\dr_\ft)$ and $M(\tr_\ft^{\mu_0})$, there is no general difference that can be established between the terms $(I_\nu + I)$ and $(\tI_\nu + \tI)$. Furthermore, the $d(1 - \gamma(\ft, \bx))$ term is exactly the same in both cases too. Hence, to distinguish between both upper bounds, we need to focus on the terms:
\begin{equation*}
    \expec{\df_0(\cdot | \bx)}{\big\lVert\by_0\big\rVert_2^2},
    \>\>
    \expec{\tf_0(\cdot | \bx)}{\big\lVert\ty_0\big\rVert_2^2},
\end{equation*}
which can be compared via the following derivation
\begin{align*}
    \mathbb{E}_{\tf_0(\cdot | \bx)} &\left[
        \big\lVert\ty_0\big\rVert_2^2
    \right]    \\
    &=
    \int \norm{y_0}_2^2 \> \tf_0(y_0 | \bx) \> dy_0
\intertext{By Lemma \ref{lemma:density-tilde}:}
    &=
    \int \norm{y_0}_2^2 \> \df_0(y_0 + \mu_0 | \bx) \> dy_0
\intertext{Using the substitution $u_0 = y_0 + \mu_0$ for the integral:}
    &=
    \int \norm{u_0 - \mu_0}_2^2 \> \df_0(u_0 | \bx) \> du_0 \\
    &=
    \int \left(\norm{u_0}_2^2 + \norm{\mu_0}_2^2 - 2\mu_0^T u_0 \right) \> \df_0(u_0 | \bx) \> du_0 \\
    &=
    \int \norm{u_0}_2^2 \df_0(u_0 | \bx) du_0
    + \norm{\mu_0}_2^2 \int \df_0(u_0 | \bx) du_0 \\
    &\hspace{97pt}- 2 \mu_0^T \int u_0 \> \df_0(u_0 | \bx) du_0 \\
    &=
    \expec{\df_0(\cdot | \bx)}{\big\lVert\by_0\big\rVert_2^2}
    + \norm{\mu_0}_2^2 - 2 \norm{\mu_0}_2^2 \\
    &=
    \expec{\df_0(\cdot | \bx)}{\big\lVert\by_0\big\rVert_2^2}
    - \norm{\mu_0}_2^2 \\
    &\leq
    \expec{\df_0(\cdot | \bx)}{\big\lVert\by_0\big\rVert_2^2}.
\end{align*}
This finishes the proof of Theorem \ref{thm:KL-upper-bound-inequality}. When comparing the bounds in equations (\ref{eq:KL-upper-bound}) and (\ref{eq:KL-upper-bound-centered}), we have shown that all terms are or can be made equal, except for the second moment terms, for which the above inequality shows that the standard (non-centered) process has a value at least as big as the centered process. In other words, we conclude that $M(\tr_\ft^{\mu_0}) \leq M(\dr_\ft)$.

\begin{figure*}
    \centering
    \includegraphics[width=0.9\textwidth]{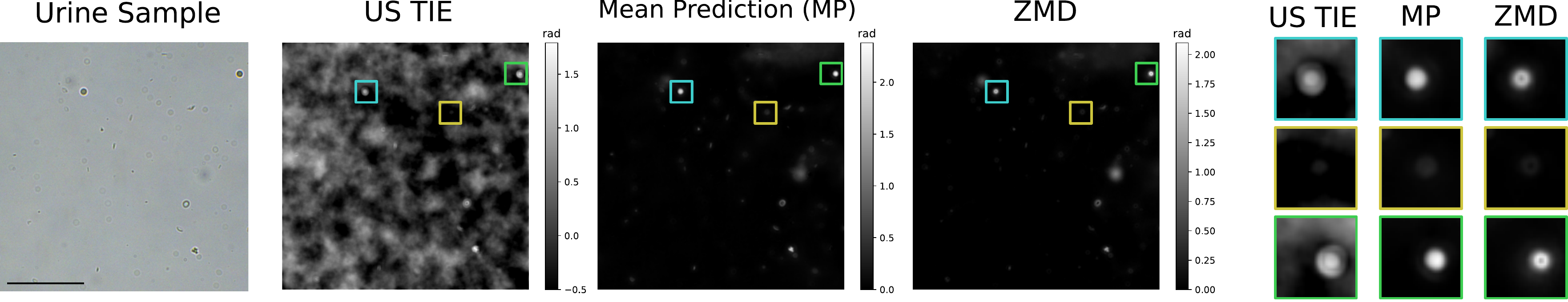}
    \caption{Urine sample with red blood cells. From left to right: US TIE phase reconstruction, Mean Prediction, and Zero-Mean Diffusion prediction. Regions of the image are enlarged to better show the reconstruction details. Scale bar is 200 $\mu$m.}
    \label{fig:rbcs}
\end{figure*}

\section{Additional Experiments}
\label{sec:additional-experiments}

To further validate our method, we conducted additional experiments by acquiring three through-focus stacks of red blood cells using a microscope equipped with a plan objective. We perform comparative analyses between our method and the US TIE technique. The results demonstrate that our method successfully recovers structural details that are not captured by any of the US TIE reconstructions. For instance, notice the black dots at the center of red blood cells in the enlarged images from Figure \ref{fig:rbcs}, successfully recovered by ZMD but absent in the US TIE solution. Our method thus provides enhanced accuracy and reliability in the reconstruction of red blood cell morphology. These findings underscore the superior performance of ZMD in extracting critical structural information. Figure \ref{fig:qpi-rbc} shows the results of the different methods, and Tables \ref{fig:table-ssim-rbc} and \ref{fig:table-mae-rbc} present the MS-SSIM and MAE scores for each sample and method, with sample numbers corresponding to the rows in Figure \ref{fig:qpi-rbc}.

\begin{figure}
    \centering
    \includegraphics[width=0.9\columnwidth]{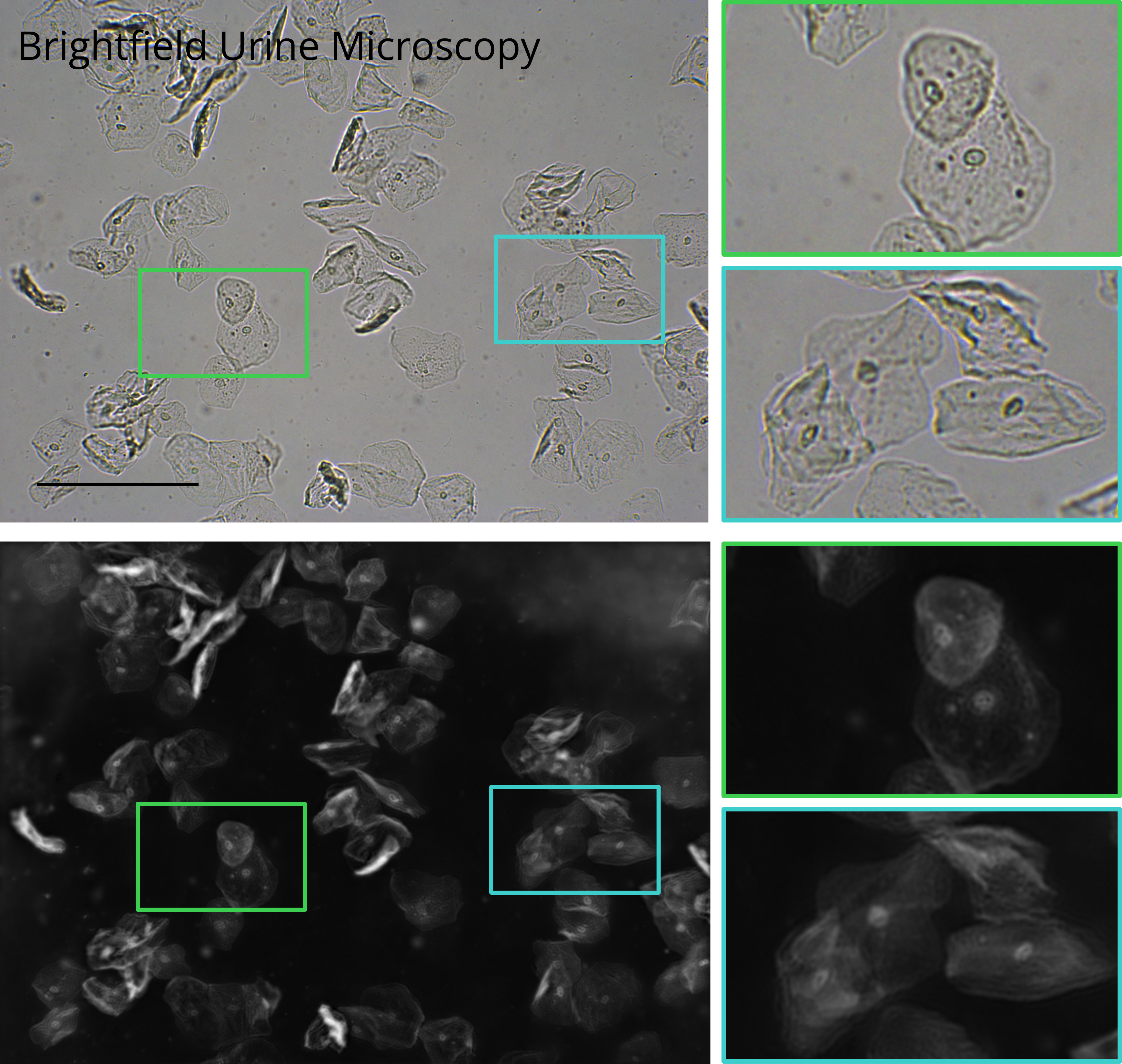}
    \caption{Phase quantification performed on clinical images. Regions of the images are enlarged to better show the objects in the image. Scale bar is 200 $\mu$m.}
    \label{fig:uti-2}
\end{figure}

\begin{figure}
  \centering
  \includegraphics[width=0.9\columnwidth]{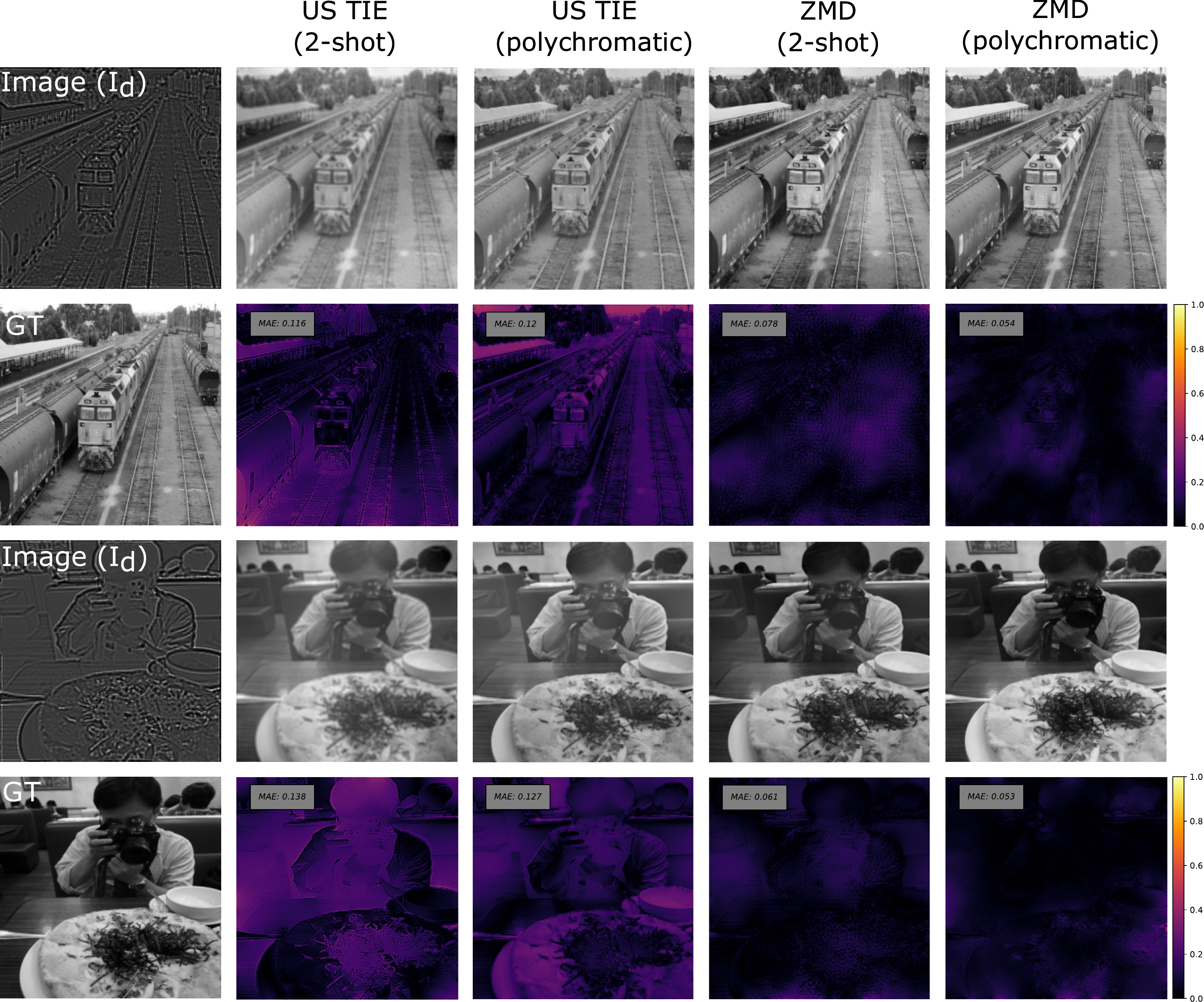}
  \caption{QPI methods assessed using synthetic images from HCOCO.}
  \label{subfig:qpi-syn-2}
\end{figure}

\begin{figure}
  \centering
  \includegraphics[width=0.9\columnwidth]{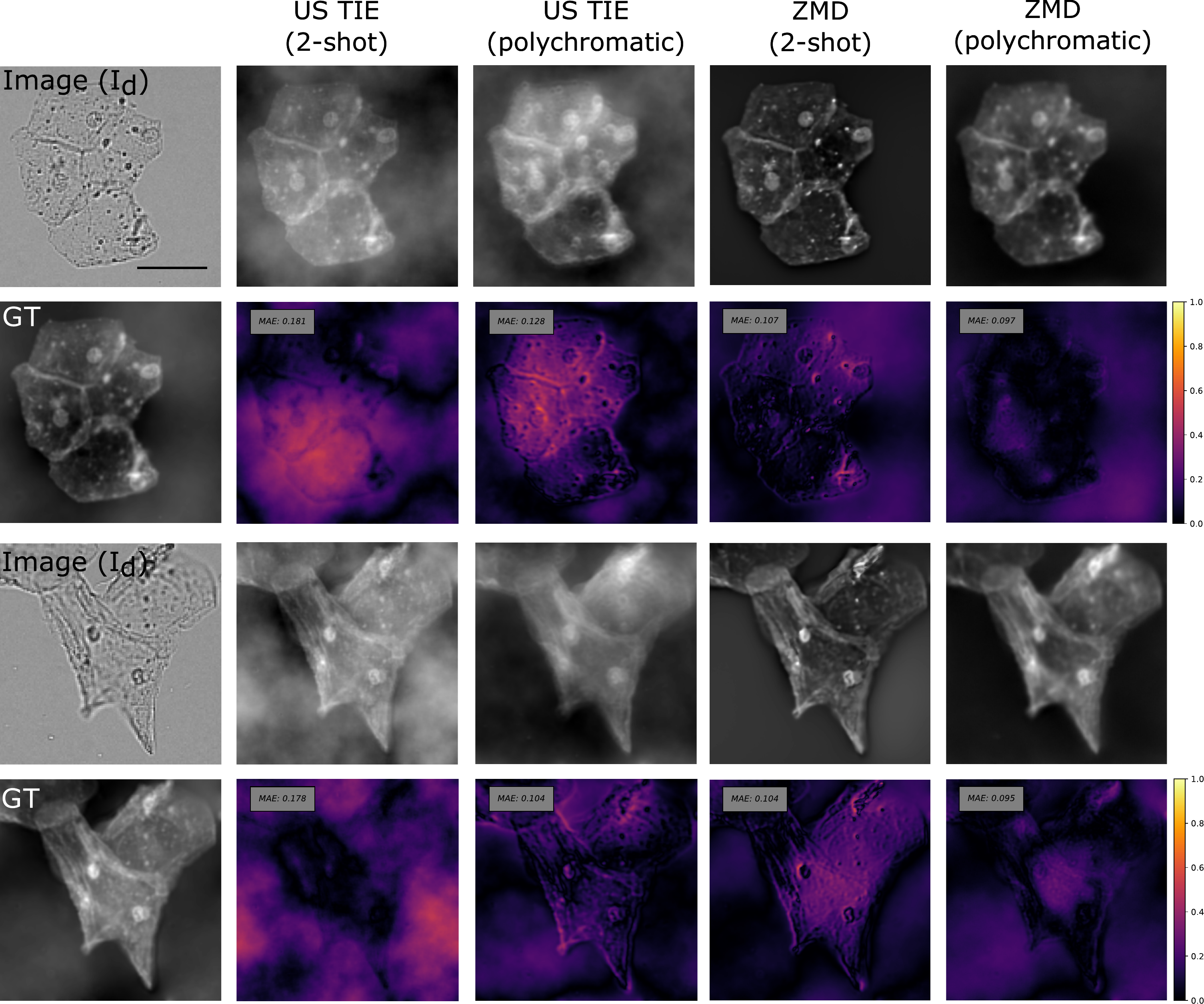}
  \caption{QPI methods assessed using clinical images. Scale bar (upper leftmost image) is 50 $\mu$m.}
  \label{subfig:qpi-real-2}
\end{figure}

\begin{table*}
\caption{Performance metrics for QPI Through-Focus Brightfield Images for red blood cells. ``2s'' stands for the 2-shot modality (using two images to measure the phase), while ``p'' stands for the polychromatic (single exposure) modality.}
    \centering
    \begin{subtable}{0.47\textwidth}
        \scriptsize
        \centering
        \caption{MS-SSIM on Through-Focus Brightfield Images.}
        \label{fig:table-ssim-rbc}
        \begin{tabular}{llllll}
            \toprule
            Sample & US TIE (2s) & US TIE (p)& MP (p) & ZMD (2s) & ZMD (p) \\
            \midrule
            1 & \textbf{0.90} & 0.67 & 0.77 & 0.83 & 0.80 \\
            2 &  0.77 &   0.52&0.78 & 0.71 &\textbf{0.83} \\
            3  & 0.79 & 0.47 & 0.88& 0.66 & \textbf{0.90} \\
            \bottomrule
        \end{tabular}
    \end{subtable}%
    \hfill
    \begin{subtable}{0.47\textwidth}
        \scriptsize
        \centering
        \caption{MAE on Through-Focus Brightfield Images.}
        \label{fig:table-mae-rbc}
        \begin{tabular}{llllll}
            \toprule
            Sample & US TIE (2s) & US TIE (p) &MP (p) & ZMD (2s) & ZMD (p) \\
            \midrule
            1 & 0.11 & 0.17&0.09 & 0.08 & \textbf{0.08} \\
            2 & 0.08 & 0.24&\textbf{0.06}  & 0.08 & \textbf{0.06} \\
            3 & 0.06 & 0.25& \textbf{0.06} & 0.14 & \textbf{0.06} \\
            \bottomrule
        \end{tabular}
    \end{subtable}
    
    \label{fig:combined-tables-rbc}
\end{table*}

\begin{figure*}
    \centering
    \includegraphics[width=0.75\textwidth]{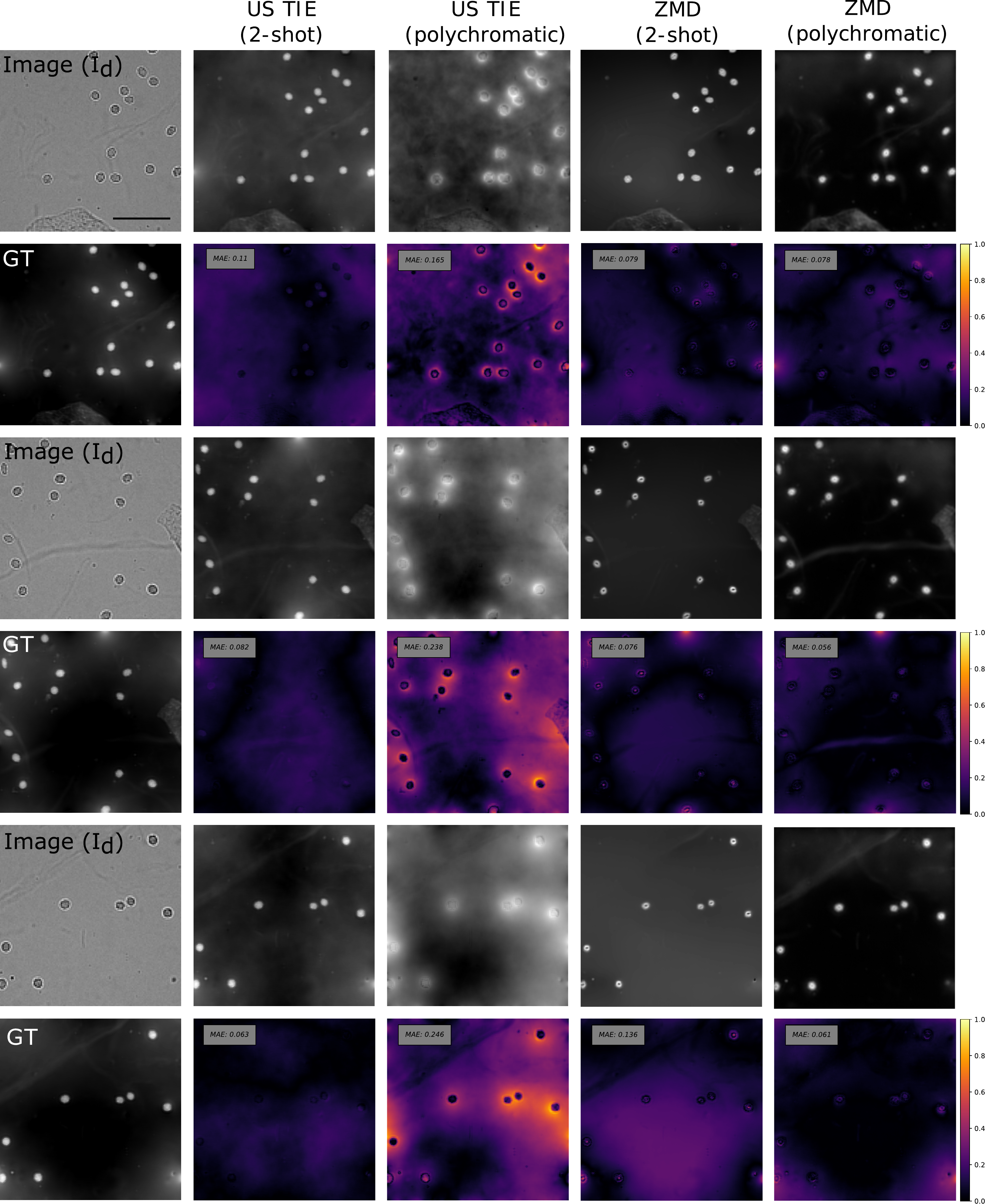}
    \caption{Comparison of QPI methods. Each sample is composed from left to right: image with defocus $d$ ($I_d$) with corresponding ground truth (GT) below; reconstructions and error images follow. All phase images are quantitatively interpretable, with values between 0 and 2.5 radians. Scale bar is 50 $\mu$m.}
    \label{fig:qpi-rbc}
\end{figure*}

\newpage
\onecolumn

\section{Implementation Details}

\subsection{Hardware Used}
All the experiments were conducted on a system running Ubuntu 18.04 with an Intel i7-12700KF processor (12 cores), 32GB of main memory, 1TB of swap memory, and one NVIDIA A6000 graphics card. We implemented our model using TensorFlow 2.10 \cite{tensorflow2015-whitepaper}.


\subsection{Training and Inference Algorithms}
Algorithm \ref{alg:training_pdm1} describes the training of the proposed algorithm, while Algorithm \ref{alg:training_pdm2} demonstrates the inference process.
For the noise-prediction model we employ the same architecture used in SR3 \cite{saharia_palette_2022}, and for the mean model $\mu_\theta$ we use a traditional U-Net \cite{ronneberger2015unetconvolutionalnetworksbiomedical}.

A key hyper-parameter for diffusion models is the number of timesteps $T$. The values reported in the literature are generally in the $[100, 200]$ range \cite{saharia_palette_2022, dhariwal_diffusion_2021, ho_denoising_2020}. Based on our experience, this allows for good reconstruction quality while keeping inference times at reasonable levels \cite{dellamaggiora2024conditional}. We tested values of $T=100$, $T=150$, and $T=200$. We found that the best results in terms of average error metrics were obtained with $T=200$, which is the configuration we use for all metrics reported in this work. Empirical variance of error metrics was comparable for all three values. On another front, the loss function for ZMD depends on a hyper-parameter $\omega$ (see equation \eqref{eqn:loss-zmd}). We compared the results obtained with $\omega=1$ and $\omega=2$, keeping $\omega=2$ as it provided marginally better results.

\begin{figure*}[ht]
\centering
\SetAlFnt{\small}
\SetAlCapFnt{\small}
\SetAlCapNameFnt{\small}

\begin{algorithm}[H]
\caption{Training of Noise-Prediction Model $\tilde{\eps}_\nu(\tz_t(\epsilon), t,\bx)$}
\label{alg:training_pdm1}
\SetKwInOut{Input}{Input}
\SetKwInOut{Output}{Output}

\Repeat{\textnormal{converged}}{
    $(\bx, \by) \sim \df_\text{data}(\bx,\by)$ \\
    $t \sim \text{Unif}([0, 1])$ \\
    $\epsilon \sim \mathcal{N}(0, I)$ \\
    $\ty \leftarrow \by - \text{stop gradient}(\mu_\theta(X))$ \\
    \text{Take a gradient descent step} \\
    
    \Indp $\nabla_{\xi} \gL_\text{ZMD}(\gamma, \bx, \ty ,\by, \epsilon)$ \text{where $\xi$ are the parameters of the joint model. See equation (\ref{eqn:loss-zmd}) and related terms.} \\
    \Indm
}
\end{algorithm}

\vspace{0.5em}

\begin{algorithm}[H]
\caption{Inference in $T$ timesteps for input $\bx$, schedule function $\beta$ and learned models $\mu_\theta$, $\tilde{\eps}_\nu$}
\label{alg:training_pdm2}
\SetKwInOut{Input}{Input}
\SetKwInOut{Output}{Output}

\For{$t = 0$ to $T$}{
    $\betaF_t \leftarrow \frac{\betaFtx{t/T}{\bx}}{T}$\\
    $\alpha_t \leftarrow 1-\betaF_t$\\
    $\gamma_t \leftarrow \prod_{s=1}^t \alpha_{s}$\\
}
\vspace{2pt}
$\ty_T \sim \mathcal{N}(0, I)$ \\
\vspace{2pt}
\For{$t = T$ to $1$}{
    \lIf{$t = 1$}{$\epsilon \leftarrow 0$}
    \lElse{$\epsilon \sim \mathcal{N}(0, I)$}

    $\ty_{t-1} \leftarrow \frac{1}{\sqrt{\alpha_t}}\bigg(\ty_t - \frac{\betaF_t}{\sqrt{\mathbf{1}-\gamma_t}}
    \tilde{\eps}_\nu(\ty_t,t,\bx)\bigg)+ \sqrt{\betaF_t}\epsilon$
}

\Return $\ty_0 + \mu_\theta(\bx)$

\end{algorithm}

\caption{Training and Sampling algorithms for ZMD.}
\label{alg:train-inference}
\end{figure*}
\newpage
\subsection{Dataset Simulation Pipeline}
This section provides a graphical explanation of our simulation pipeline, as described in Section 3. The diagram illustrates the key steps involved in simulating polychromatic acquisitions, including processing ImageNet images, phase object modeling, defocus simulation, wavelength-specific propagation, and incorporating sensor characteristics. This visualization complements the detailed mathematical description and highlights the methodology used to construct the dataset for training phase reconstruction models.

\begin{figure}[ht]
    \centering
    \includegraphics[width=\linewidth]{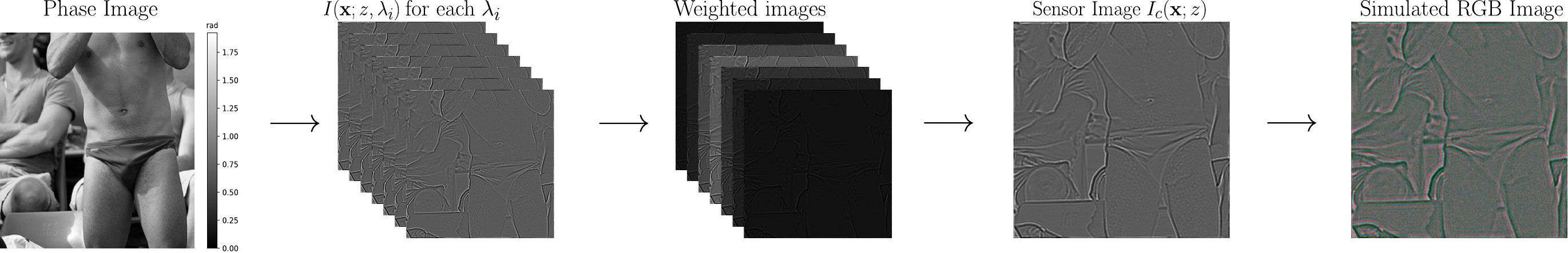}
    \caption{Overview of the simulation pipeline. The process begins with a randomly selected ImageNet image, converted to grayscale and linearly mapped to a phase range of $0$ to $3.5$ radians (\textit{Phase Image}). Using equation (2), the resulting image $I(\vx; z, \lambda_i)$ is simulated under a fixed defocus distance $z$, randomly selected between $0.1\mu$m and $3\mu$m using the Fresnel propagator for wavelengths ranging from $400$nm to $700$nm with equal contribution, discretized at $6$nm increments. Subsequently, the Power Spectral Density (PSD), $S_\lambda(\mathbf{u}) = \delta_\lambda(\mathbf{u}) Q_c(\lambda)$, is applied to generate the \textit{Weighted Images}, the parameter $\sigma_c$ is uniformly sampled between $0.01$ and $0.1$. The discretized integral is solved to compute the sensor image $I_c(\vx; z)$. Finally, this process is repeated for each RGB channel to synthesize the \textit{Simulated RGB Image}.}
    \label{fig:sim-pipeline}
\end{figure}

\newpage
\onecolumn

\section{Reproducibility Checklist}
\label{sec:rep-checklist}

This paper
\begin{itemize}
    \item Includes a conceptual outline and/or pseudocode description of AI methods introduced: \textbf{yes}
    \item  Clearly delineates statements that are opinions, hypothesis, and speculation from objective facts and results: \textbf{yes}
    \item Provides well marked pedagogical references for less-familiar readers to gain background necessary to replicate the paper: \textbf{yes}
\end{itemize}

\noindent
Does this paper make theoretical contributions? \textbf{yes}

\noindent
If yes, please complete the list below.
\begin{itemize}
    \item All assumptions and restrictions are stated clearly and formally: \textbf{yes}
    \item All novel claims are stated formally (e.g., in theorem statements): \textbf{yes}
    \item Proofs of all novel claims are included: \textbf{yes}
    \item Proof sketches or intuitions are given for complex and/or novel results: \textbf{yes}
    \item Appropriate citations to theoretical tools used are given: \textbf{yes}
    \item All theoretical claims are demonstrated empirically to hold: \textbf{partial. We did not test the exact theoretical propositions, but in our view the empirical observations (i.e., ZMD converging to useful results, while CVDM needs normalized data to work) are in line with, and provide support for, our theoretical claims.}
    \item All experimental code used to eliminate or disprove claims is included: \textbf{NA}
\end{itemize}

\noindent
Does this paper rely on one or more datasets? \textbf{yes}

\noindent
If yes, please complete the list below.
\begin{itemize}    
    \item A motivation is given for why the experiments are conducted on the selected datasets: \textbf{yes}
    \item All novel datasets introduced in this paper are included in a data appendix: \textbf{no}
    \item All novel datasets introduced in this paper will be made publicly available upon publication of the paper with a license that allows free usage for research purposes: \textbf{yes}
    \item All datasets drawn from the existing literature (potentially including authors’ own previously published work) are accompanied by appropriate citations: \textbf{yes}
    \item All datasets drawn from the existing literature (potentially including authors’ own previously published work) are publicly available: \textbf{yes}
    \item All datasets that are not publicly available are described in detail, with explanation why publicly available alternatives are not scientifically satisficing: \textbf{yes}
\end{itemize}

\noindent
Does this paper include computational experiments? \textbf{yes}

\noindent
If yes, please complete the list below.
\begin{itemize}
    \item Any code required for pre-processing data is included in the appendix: \textbf{no}.
    \item All source code required for conducting and analyzing the experiments is included in a code appendix: \textbf{no}.
    \item All source code required for conducting and analyzing the experiments will be made publicly available upon publication of the paper with a license that allows free usage for research purposes: \textbf{yes}
    \item All source code implementing new methods have comments detailing the implementation, with references to the paper where each step comes from: \textbf{yes}
    \item If an algorithm depends on randomness, then the method used for setting seeds is described in a way sufficient to allow replication of results: \textbf{yes}
    \item This paper specifies the computing infrastructure used for running experiments (hardware and software), including GPU/CPU models; amount of memory; operating system; names and versions of relevant software libraries and frameworks: \textbf{yes}
    \item This paper formally describes evaluation metrics used and explains the motivation for choosing these metrics: \textbf{yes}
    \item This paper states the number of algorithm runs used to compute each reported result: \textbf{yes}
    \item Analysis of experiments goes beyond single-dimensional summaries of performance (e.g., average; median) to include measures of variation, confidence, or other distributional information: \textbf{partial, as standard deviation was included for synthetic dataset (see Table 1). Also, to further validate generalization power we tested our model in samples acquired specifically for this paper, and we tested our model for two different types of microscope with uncurated images.}
    \item The significance of any improvement or decrease in performance is judged using appropriate statistical tests (e.g., Wilcoxon signed-rank): \textbf{partial, hypothesis testing was performed for the synthetic data. For the real microscope samples it is not possible to perform testing as the set corresponds to a few samples specifically acquired for this problem.}
    \item This paper lists all final (hyper-)parameters used for each model/algorithm in the paper’s experiments: \textbf{yes}
    \item This paper states the number and range of values tried per (hyper-) parameter during development of the paper, along with the criterion used for selecting the final parameter setting: \textbf{yes}
\end{itemize}

\end{document}